\newtheorem{theorem}{Theorem}
\newtheorem{remark}{Remark}
\newcommand{\dt}[1]{\frac{\text{d}#1}{\text{d}t}}
 \newcommand{\pp}[2]{\frac{\partial #1}{\partial #2}} % partial#1/partial#2
 \newcommand{\ppp}[2]{\frac{\partial^2 #1}{\partial {#2}^2}} % partial#1/partial#2
\DeclareMathOperator{\gO}{O}
\begin{document}

\begin{tikzpicture}[remember picture,overlay]
	\node[anchor=north east,inner sep=20pt] at (current page.north east)
	{\includegraphics[scale=0.2]{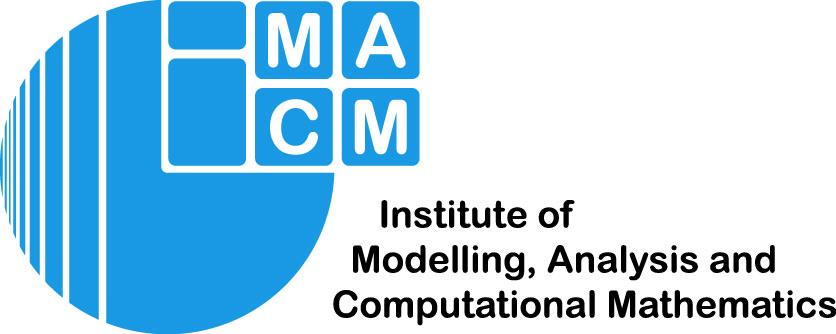}};
\end{tikzpicture}

\begin{frontmatter}

\title{A Nonstandard Finite Difference Scheme\\ for an SEIQR Epidemiological PDE Model}
% with Vaccination, Treatment, Social Distancing, and Quarantine Strategies}

\author[BUW,MAIS]{Achraf Zinihi}
\ead{a.zinihi@edu.umi.ac.ma} 
% \ead{achrafzinihi99@gmail.com}

\author[BUW]{Matthias Ehrhardt\corref{Corr}}
\cortext[Corr]{Corresponding author}
\ead{ehrhardt@uni-wuppertal.de}

\author[MAIS]{Moulay Rchid Sidi Ammi}
\ead{rachidsidiammi@yahoo.fr}

\address[BUW]{University of Wuppertal, Applied and Computational Mathematics,\\
Gaußstrasse 20, 42119 Wuppertal, Germany}

\address[MAIS]{Department of Mathematics, AMNEA Group, Faculty of Sciences and Techniques,\\ Moulay Ismail University of Meknes, Errachidia 52000, Morocco}

% \address[BUW]{University of Wuppertal, Chair of Applied and Computational Mathematics,\\ Gaußstrasse 20, 42119 Wuppertal, Germany}

% ===================================================================

\begin{abstract}
%In this paper, we propose a nonstandard finite difference scheme for a reaction-diffusion SEIQR model to describe the spatiotemporal transmission dynamics of an infectious disease.
This paper introduces a nonstandard finite difference (NSFD) approach to a reaction-diffusion SEIQR %(susceptible-exposed-infected-quarantined-recovered) 
epidemiological model, which captures the spatiotemporal dynamics of infectious disease transmission. 
Formulated as a system of semilinear parabolic partial differential equations (PDEs), the model extends classical compartmental models by incorporating spatial diffusion to account for population movement and spatial heterogeneity. % \cite{ZhangC}. 
The proposed NSFD discretization is designed to preserve the continuous model's essential qualitative features, such as positivity, boundedness, and stability, which are often compromised by standard finite difference methods. We rigorously analyze the model's well-posedness, construct a structure-preserving NSFD scheme for the PDE system, and study its convergence and local truncation error. 
Numerical simulations validate the theoretical findings and demonstrate the scheme's effectiveness in preserving biologically consistent dynamics. 
% This work provides a robust computational framework for simulating and analyzing spatially explicit epidemic models.
\end{abstract}

\begin{keyword}
Epidemic modeling \sep Nonstandard finite difference scheme \sep Spatiotemporal dynamics \sep Positivity-preserving.

\textit{2020 Mathematics Subject Classification:} 92D30, 65M06, 35K57, 37N30.
\end{keyword}

%%%%%%%%%%%%%%%%%%%%%% Journal name
% \journal{} % for preprint
 \journal{Applied Mathematics and Computation}

% Ideas for journals:
% Applied Mathematics and Computation
% Computers \& Mathematics with Applications
% Numerical Methods for Partial Differential Equations
% Journal of Mathematical Biology
% Bulletin of Mathematical Biology
% Communications in Nonlinear Science and Numerical Simulation
% Nonlinear Analysis: Real World Applications
% Mathematical Methods in the Applied Sciences
% Journal of Computational and Applied Mathematics

% Journal of Difference Equations and Applications

% ================================================================ 

\end{frontmatter}

% \tableofcontents

% ================================================================ Section 1

\section{Introduction}\label{S1}
%%%%%%% Models
Mathematical modeling has been instrumental in advancing our understanding of infectious disease dynamics and shaping public health interventions. % \cite{Bandekar2022, MohammedAwel2020, Zinihi2025MM}.
Foundational compartmental models, such as the SIR and SEIR frameworks 
\cite{Martcheva}, %\cite{Korobeinikov2008, Kuznetsov1994, KhudaBukhsh2024, zinihi2025OC}, 
originally introduced by Kermack and McKendrick in the early 20th century \cite{Kermack1927}, have provided essential insights into the spread of epidemics. 
Over the years, these classical models have been extended to incorporate greater epidemiological realism, accounting for factors such as latency periods, waning immunity, % \cite{EGK19},
asymptomatic transmission, and spatial variability  % \cite{Balderrama2022, Qiu2024, Zinihi2025FDE}.
\cite{EGK19, Zinihi2025FDE}.

Recent public health crises, such as the COVID-19 pandemic, Ebola outbreaks, and the resurgence of diseases like measles and tuberculosis, have renewed interest in the development of more realistic and spatially explicit epidemic models \cite{Zinihi2025FDE}.
%\cite{Bandekar2022, MohammedAwel2020, Zinihi2025MM, Korobeinikov2008, KhudaBukhsh2024, zinihi2025OC, Balderrama2022, Qiu2024, Zinihi2025FDE}. 
In this context, reaction-diffusion systems have emerged as powerful frameworks for capturing both the temporal evolution and spatial spread of infectious diseases within populations \cite{Anita, Capasso1}. 
By introducing diffusion terms into classical compartmental models, these systems enable researchers to model population movement and spatial heterogeneity in infection risk \cite{ZhangC}, which are the key factors in designing effective regional containment strategies. 
%For instance, \cite{Auricchio2023} studied the well-posedness of a spatiotemporal SEIR model represented by a system of four nonlinear PDEs, where the diffusion coefficients depend on the total SEIR population.

%%%%% PDE  models
In this work we focus on the numerical solution of partial differential equation (PDE) models in epidemiology, which extend ordinary differential equation (ODE) models using spatial reaction-diffusion systems, where each compartment, representing a different species, is allowed to invade a spatial domain $\Omega \subset \mathbb{R}^n$
%(or a metric graph network) 
with a space-dependent density. 
This spatial diffusion mechanism is modelled with the Laplace operator, leading to a system of semilinear parabolic PDEs % of the form
%\begin{equation}\label{eq:3}
%    \dfrac{\partial u}{\partial t}(x, t) = D \Delta u(x, t) + f\bigl(u(x,t)\bigr)
%\end{equation}
supplied with suitable boundary conditions.
%, where $D = \operatorname{diag}(d_1, d_2, \ldots, d_n)$, 
%$f\colon \mathbb{R}^n \to \mathbb{R}$ is the interaction law among the species via their densities. %, and
%\begin{equation*}\label{eq:4}
%     \Delta u(x, t) = \dfrac{\partial^2 u}{\partial x_1^2}(x, t) + \ldots + \dfrac{\partial^2 u}{\partial x_n^2}(x, t).
%\end{equation*}
%is the Laplace operator.
%%%%%%%%%%%
Spatial PDE models have been used to study the transmission of infection, depending on how a particular disease is transmitted between different populations or subpopulations, see e.g.\ 
the SIS reaction-diffusion model in a heterogeneous environment \cite{Allen1}, the modified SIS diffusion models \cite{Huang}, a  reaction-advection-diffusion system with free boundaries \cite{Cheng2021},  or the spatially diffusive SIR epidemic model with the mass action infection mechanism \cite{Kuniya}. 

In a recent study \cite{Zinihi2025S}, the authors proposed a spatiotemporal SEIQR epidemic model that incorporates optimal control strategies. The model accounted for disease transmission in space and time and included three  time- and space-dependent control variables: vaccination for susceptible individuals, social distancing for exposed and infected individuals, and treatment for quarantined individuals. The authors pursued four goals: (i) proving the existence, uniqueness, and positivity of global strong solutions using analytic semigroup theory, (ii) establishing the existence of optimal controls through functional analysis, (iii) deriving first-order necessary conditions via convex perturbations and adjoint equations, and (iv) conducting numerical simulations. 
The numerical results showed that combining pharmaceutical and non-pharmaceutical interventions was more effective in reducing the disease burden and associated control costs.

%%% NSFD 
Conventional finite difference methods often encounter numerical instability and fails to reproduce important qualitative properties of the solutions when applied to differential equations modeling real-world phenomena. 
% These issues become particularly pronounced when large step sizes are used during simulations. 
The nonstandard finite difference (NSFD) approach, pioneered by Mickens \cite{Mickens1, Mickens61a}, offers a more robust alternative to address these limitations. 
The NSFD method is based on carefully formulated design rules that ensure dynamic consistency by preserving key qualitative features of the original continuous system, such as positivity, boundedness and asymptotic behaviour \cite{EhrhardtMickens}.

%\Tachraf{We talked before about the abbreviations ODE and PDE. I think it would be good to use them in the following paragraph.}
%%%%%%%%%%%%%%%%% Applications
Numerical solution methods based on NSFD have been widely applied to both ordinary and partial differential equations (PDEs) across various application domains. 
%%% Examples
For example, Mickens developed an NSFD scheme for the Burgers equation with a logistic reaction-diffusion term, which preserves the positivity and boundedness of the continuous model \cite{Mickens-Burgers}.
In \cite{Conte}, the authors designed an NSFD scheme
for reaction-diffusion PDEs describing the coexistence of plant species in arid environments. Costa et al.\ \cite{Costa}  proposed an NSFD scheme for solving a 15-component model of the immune response to SARS-CoV-2.

%\Tmatthias{insert more NSFD applications: i will do it.}
%\Tachraf{I will search for the relevant applications of NSFD and mark them in red, along with all the additions I include in this section, to make them easily distinguishable}

In NSFD schemes, the nonstandard denominator in the discrete derivative reflects the qualitative features of the underlying differential equation.
%%%% properties
In epidemiology and ecology, differential equations modeling infectious disease dynamics and predator-prey interactions typically require solutions that remain positive, cf.\ e.g.\ \cite{Costa, Mammeri20, Martcheva}.
As demonstrated in \cite{Maamar} (for a time-fractional model of Zika virus transmission), numerical solutions may become negative when using standard schemes, such as the standard fourth-order Runge-Kutta method.
% \cite{KHALSARAEI}
Additionally, standard schemes can produce numerical fixed-points, which are not actual fixed points of the original ODE model, %system, cf.\ 
\cite{mickens07}. 

As discussed in \cite{Pasha}, standard finite difference (SFD) methods 
are prone to numerical instability and may fail to preserve essential properties, especially in epidemiological contexts.
Additionally, although various NSFD schemes have been developed for reaction-diffusion systems, many lack temporal accuracy or consistency.
The study addresses these issues by proposing an improved NSFD scheme that guarantees first-order accuracy in time and second-order accuracy in space while ensuring the positivity of the numerical solution.
Several other studies have reinforced the advantages of NSFD approaches in capturing the spatiotemporal dynamics of complex systems. For example, specialized adaptations of Mickens's rules, such as nonlocal approximations of nonlinear terms and nonlinear denominator functions, have been shown to maintain solution properties in cross-diffusion and chemotaxis models \cite{Chapwanya2, deWaal}.
Furthermore, NSFD methods have been successfully applied to fractional reaction-diffusion systems \cite{Taghipour}
and convection-diffusion models \cite{EhrhardtMickens}, demonstrating superior stability and fidelity compared to classical schemes \cite{Chapwanya2}. 
Motivated by these findings, we adopt an NSFD approach in the present work to ensure the positivity and numerical stability of the discretized SEIQR system while accurately capturing the effects of spatial diffusion and nonlinear interactions.

In this paper, we propose a reaction-diffusion SEIQR model governed by a system of parabolic PDEs. % partial differential equations.
We assume the population is distributed over a spatially homogeneous domain, and the compartmental densities vary in time and space. 
This captures the spatiotemporal dynamics of disease transmission.
Specifically, the densities of the susceptible, exposed, infectious, quarantined, and recovered individuals are denoted by $S(t,x), E(t,x), I(t,x), Q(t,x),$ and $R(t,x),$ respectively.
This study builds upon our previous work on a spatiotemporal SEIQR model \cite{Zinihi2025S}, which focused on the analysis of optimal control strategies using three intervention variables. 
In contrast, the present work introduces an NSFD scheme to simulate the model's dynamics numerically.
This structure-preserving discretization approach ensures that essential properties of the continuous model, 
such as the positivity and boundedness of the solution, are retained in the numerical setting. 
This provides a more reliable and biologically consistent computational framework.

% \Tachraf{In Section~\ref{S3}, we started with the 2D case and then presented a 1D scheme. Is the description we provided earlier still appropriate?}
%%%%%%%%%%%%%%%%% Structure
Our work is structured as follows:  
Section~\ref{S2} introduces the proposed SEIQR epidemic model, which is formulated as a system of PDEs. 
It presents the compartmental transitions and describes the spatiotemporal dynamics of disease spread.
Then, it provides a well-posedness analysis of the model in an $m$-dimensional setting.  
In Section~\ref{S3} we construct an NSFD scheme for the two-dimensional version of the PDE model.
%, and extends it to the two-dimensional case.  
Section~\ref{S4} provides a detailed analysis of the numerical scheme, including convergence to a feasible solution and stability properties. 
Section~\ref{S5} presents numerical simulations that illustrate and validate the theoretical results.  
Finally, Section~\ref{S6} summarizes the main findings and outlines possible directions for future research.

\section{Mathematical Analysis of the Model}\label{S2}
In this section, we propose a five-dimensional epidemiological model to describe the transmission dynamics of an epidemic.  
In this model, individuals in the population transition through five compartments over time: 
Susceptible ($S$), Exposed ($E$), Infected ($I$), Quarantined ($Q$), and Recovered ($R$). 
The transmission coefficients used in the SEIQR model are summarized in Table~\ref{Tab1}.

\begin{table}[htb]
\centering
\setlength{\tabcolsep}{0.8cm}
\caption{Transmission coefficients for the proposed SEIQR reaction-diffusion model.}\label{Tab1}
\adjustbox{max width=\textwidth}{
\begin{tabular}{cc}
\hline 
\textbf{Symbol} & \textbf{Description} \\
\hline \hline 
$\Lambda$ & Recruitment rate (e.g.\ birth or immigration) \\
\hline
\multirow{2}{*}{$\beta_1$} & Transmission rate due to contact between\\ & susceptible and exposed individuals \\
\hline
\multirow{2}{*}{$\beta_2$} & Transmission rate due to contact between\\ & susceptible and infectious individuals \\
\hline
$\mu$ & Natural death rate \\
\hline
$\delta$ & Rate at which exposed individuals become infectious \\
\hline
$\gamma$ & Rate at which infectious individuals are quarantined \\
\hline
$\alpha$ & Recovery rate of quarantined individuals \\
\hline
\multirow{2}{*}{$\rho$} & Rate at which non-infected quarantined individuals\\ & return to the susceptible class \\
\hline
$\lambda_S$, $\lambda_E$, $\lambda_I$, $\lambda_Q$, $\lambda_R$ & Diffusion coefficients for $S$, $E$, $I$, $Q$, and $R$ respectively \\
\hline
\end{tabular}
}
\end{table}

The proposed SEIQR reaction-diffusion model describes the spatiotemporal spread of an infectious disease by modeling the movement and interactions of individuals across five compartments: susceptible ($S$), exposed ($E$), infected ($I$), quarantined ($Q$), and recovered ($R$).\\
\textbf{\textit{Susceptible}} ($S$): Susceptible individuals are at risk of contracting the disease. Their population increases through recruitment $\Lambda$ and re-entry from quarantine at rate $\rho Q$, representing individuals who tested negative or were misclassified. They decrease due to natural mortality ($\mu$) and infection upon contact with exposed ($E$) and infected ($I$) individuals, at rates $\beta_1 S E$ and $\beta_2 S I$, respectively. 
Spatial movement is modeled by the diffusion term $\lambda_S \Delta S$, where $\Delta$ denotes the Laplace operator.\\
%%%%%%%%%%%%%
\textbf{\textit{Exposed}} ($E$): Exposed individuals have been infected but are not yet infectious. This group increases via contact between susceptibles and exposed individuals, governed by the transmission term $\beta_1 S E$. 
They either progress to the infected class at rate $\delta$ or die naturally. 
Spatial diffusion is represented by $\lambda_E \Delta E$.\\
\textbf{\textit{Infected}} ($I$): Infected individuals are symptomatic and capable of transmitting the disease. Their number increases through contact between susceptibles and infected individuals ($\beta_2 S I$) and through progression from the exposed class ($\delta E$). They may be quarantined at rate $\gamma$ or die naturally at rate $\mu$. Their spatial spread is modeled by $\lambda_I \Delta I$.\\
\textbf{\textit{Quarantined}} ($Q$): This compartment consists of individuals isolated after developing symptoms. They enter from the infected class at rate $\gamma$ and may recover ($\alpha Q$), die naturally ($\mu Q$), or return to the susceptible class ($\rho Q$) if found uninfected. 
Their spatial redistribution is governed by $\lambda_Q \Delta Q$.\\
\textbf{\textit{Recovered}} ($R$): Recovered individuals have acquired immunity after completing quarantine. They accumulate at rate $\alpha Q$ and are subject to natural death at rate $\mu$. Their spatial mobility is described by $\lambda_R \Delta R$.

Throughout the model, the diffusion terms $\lambda_j \Delta X$ for each compartment $X$ reflect spatial spread due to individual movement, while the mortality terms $\mu X$ account for natural deaths unrelated to the disease. 
By incorporating both local disease dynamics and spatial processes, the model captures key mechanisms of epidemic propagation, including localized outbreaks, spatial heterogeneity, and the potential impact of quarantine interventions.\\
Based on the preceding description, Figure~\ref{F1} illustrates the compartmental structure of the SEIQR model, highlighting the key transitions between health states, including infection, quarantine, and recovery.

\begin{figure}[htb]
\centering
\begin{tikzpicture}[node distance=4cm]
\node (S) [rectangle, draw, minimum size=1cm, fill=cyan!30] {S};
\node (I) [rectangle, draw, minimum size=1cm, fill=red!30, right of=S] {I};
\node (Q) [rectangle, draw, minimum size=1cm, fill=blue!30, right of=I] {Q};
\node (E) [rectangle, draw, minimum size=1cm, fill=orange!30, xshift = 4cm, yshift = 2.5cm] {E};
\node (R) [rectangle, draw, minimum size=1cm, fill=green!30, right of=Q] {R};
% Arrows with text label
\draw[->] (-1.5,0) -- ++(S) node[midway,above]{$\Lambda$};
\draw [->] (S.north) -- (0,2.5) -- (E) node[midway,below]{$\beta_1 S E$};
\draw [->] (S) -- (I) node[midway,below]{$\beta_2 S I$};
\draw [->] (4.1,2) -- (4.1,0.5) node[midway,right]{$\delta E$};
\draw [->] (I) -- (Q) node[midway,below]{$\gamma I$};
\draw [->] (Q) -- (R) node[midway,below]{$\alpha Q$};
\draw [->] (7.9,-0.5) -- (7.9,-1.7) -- (0.1,-1.7) node[midway,below]{$\rho Q$} -- (0.1,-0.5);
% ------------------------------------
\draw[->] (-0.1,-0.5) -| (-0.1,-1.5) node[near end,left]{$\mu S$};
\draw[->] (4.5,2.5) -- (5.7,2.5) node[midway,below]{$\mu E$};
\draw[->] (3.9,0.5) -| (3.9,1.5) node[near end,left]{$\mu I$};
\draw[->] (8.1,-0.5) -| (8.1,-1.5) node[near end,right]{$\mu Q$};
\draw[->] (R.south) -| (12,-1.5) node[near end,right]{$\mu R$};
\end{tikzpicture}
\captionof{figure}{Transmission pathways in the proposed SEIQR model.}\label{F1}
\end{figure}
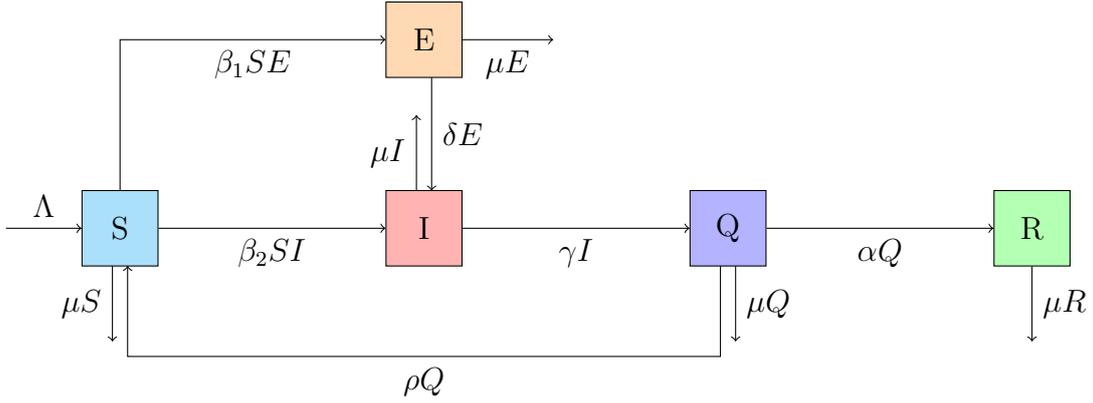

Let $\Omega \subset \mathbb{R}^m$ be a bounded domain with smooth boundary $\partial \Omega$, where $m \in \{1, 2, 3\}$.
\begin{remark}
    Although in general $m \in \mathbb{N}^*$, we restrict $m$ to the values $1 \leq m \leq 3$ to reflect physical reality, as our spatial models are typically embedded in one, two, or three dimensional space. 
    Higher dimensions are not physically meaningful in standard spatial epidemiological or diffusion-based models.
\end{remark}
The spatiotemporal dynamics of the proposed SEIQR model are mathematically described by the following system of reaction-diffusion equations

\begin{align}\label{E2.1}
\left\{ \begin{aligned}
\frac{\partial S(t, x)}{\partial t} 
 &= \lambda_S \Delta S(t, x) + \Lambda + \rho Q(t, x)
 - \beta_1 S(t, x) E(t, x) - \beta_2 S(t, x) I(t, x) - \mu S(t, x), \\
%&\qquad - \beta_1 S(t, x) E(t, x) - \beta_2 S(t, x) I(t, x) - \mu S(t, x), \\
%%%%%%%
\frac{\partial E(t, x)}{\partial t} &= \lambda_E \Delta E(t, x) + \beta_1 S(t, x) E(t, x) - (\delta + \mu) E(t, x), \\
%%%%%%%
\frac{\partial I(t, x)}{\partial t} &= \lambda_I \Delta I(t, x) + \beta_2 S(t, x) I(t, x) + \delta E(t, x) - (\gamma + \mu) I(t, x), \\
%%%%%%%
\frac{\partial Q(t, x)}{\partial t} &= \lambda_Q \Delta Q(t, x) + \gamma I(t, x) - (\alpha + \rho + \mu) Q(t, x), \\
%%%%%%%
\frac{\partial R(t, x)}{\partial t} &= \lambda_R \Delta R(t, x) +  \alpha Q(t, x) - \mu R(t, x),
\end{aligned}\right.\text{in}\;\mathcal{U}, 
\end{align}
with homogeneous Neumann boundary conditions
\begin{equation}\label{E2.2}
\nabla S\cdot \vec{n} = \nabla E\cdot \vec{n} = \nabla I\cdot \vec{n} = \nabla Q\cdot \vec{n} = \nabla R\cdot \vec{n} = 0, \ \text{ on } \Sigma_T,
\end{equation}
and initial conditions
\begin{equation}\label{E2.3}
S(0,x)=S_0, \ E(0,x)=E_0, \ I(0,x)=I_0, \ Q(0,x)=Q_0, \ R(0,x)=R_0, \ \text{ in } \Omega,
\end{equation}
where $T>0$, $\mathcal{U} = [0, T]\times\Omega$, $\vec{n}$ being the normal to the boundary $\Sigma_T = [0, T]\times\partial\Omega$. 
The imposition of homogeneous Neumann (no-flux) boundary conditions ensures that the SEIQR mode is self-contained, with dynamics driven entirely by internal processes and no movement across the boundary $\partial \Omega$.
In addition, the initial data for all compartments are positive throughout the domain $\Omega$.
Each equation describes the rate of change of a compartment with respect to time and space, capturing the interactions among compartments based on contact rates, transition dynamics, and natural processes such as disease progression and recovery.

The positivity and boundedness of the solutions of a biological-epidemiological model are crucial %essential 
properties. 
Hence, it is necessary to demonstrate that all subpopulations in the system~\eqref{E2.1}--\eqref{E2.3} 
 are bounded and non-negative for all times $t\ge 0$. 
In a previous work \cite{Zinihi2025S} the authors have studied the existence, uniqueness, positivity, and boundedness of the solution to a model similar to
the PDE model~\eqref{E2.1}--\eqref{E2.3}.

%%%%%%%%%%%%%%%%%%%%%%%%%%%%%%% Theorem 1
\begin{theorem}\label{T1}
Let the initial data for system~\eqref{E2.1}--\eqref{E2.3} be positive. Then, the proposed system admits a unique global solution $\psi = (S, E, I, Q, R)$ that remains positive and bounded on $\mathcal{U}$, with regularity
\begin{equation*}
\psi_i \in W^{1,2}\bigl(0, T; L^2(\Omega)\bigr) \cap L^2\bigl(0, T; H^2(\Omega)\bigr) \cap L^\infty\bigl(0, T; H^1(\Omega)\bigr) \cap L^\infty(\mathcal{U}), \ \forall i \in \{1, 2, 3, 4, 5\}.
\end{equation*}
Moreover, 
\begin{equation*}
\forall i \in \{1, 2, 3, 4, 5\}, \ \Bigl\|\frac{\partial \psi_i}{\partial t}\Bigr\|_{L^2(\mathcal{U})}
+ \bigl\|\psi_i\bigr\|_{L^2(0, T; H^2(\Omega))}
+ \bigl\|\psi_i\bigr\|_{H^1(\Omega)} 
+ \bigl\|\psi_i\bigr\|_{L^{\infty}(\mathcal{U})} < \infty.
\end{equation*}
\end{theorem}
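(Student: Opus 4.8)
The plan is to treat~\eqref{E2.1}--\eqref{E2.3} as an abstract semilinear parabolic evolution equation and to combine analytic semigroup theory with an invariant-region argument and energy estimates, following the strategy of \cite{Zinihi2025S}. Writing $\psi = (S,E,I,Q,R)$ and collecting the diffusion operators into $\mathcal{A} = \mathrm{diag}(-\lambda_S\Delta,\dots,-\lambda_R\Delta)$ with domain $\{u \in (H^2(\Omega))^5 : \nabla u_i \cdot \vec{n} = 0 \text{ on } \partial\Omega\}$, the Neumann Laplacian generates an analytic semigroup on $(L^2(\Omega))^5$. The reaction field $\mathcal{F}(\psi)$ gathers the affine and quadratic couplings of~\eqref{E2.1}; since $m \le 3$ we have $H^2(\Omega) \hookrightarrow L^\infty(\Omega)$ and $H^1(\Omega)\hookrightarrow L^4(\Omega)\cap L^6(\Omega)$, so $\mathcal{F}$ is locally Lipschitz on a fractional power space $X^\alpha$ (with $\alpha\in(3/4,1)$, so that $X^\alpha\hookrightarrow(L^\infty(\Omega))^5$). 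A standard contraction-mapping (Picard) argument then yields a unique local strong solution on a maximal interval $[0,T_{\max})$.

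For positivity I would invoke the quasi-positivity of $\mathcal{F}$: evaluating each component on the face $\{\psi_i = 0,\ \psi_j \ge 0\ (j\ne i)\}$ gives $\mathcal{F}_1 = \Lambda + \rho Q \ge 0$, $\mathcal{F}_2 = 0$, $\mathcal{F}_3 = \delta E \ge 0$, $\mathcal{F}_4 = \gamma I \ge 0$, and $\mathcal{F}_5 = \alpha Q \ge 0$. Together with the no-flux boundary conditions, the parabolic maximum principle (equivalently, the invariant-region theorem for weakly coupled quasi-positive systems) shows that the nonnegative orthant is invariant, so positivity of the initial data propagates to $\psi(t)\ge 0$ for all $t\in[0,T_{\max})$.

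The crux is the global a priori bound that rules out blow-up and promotes the local solution to a global one. Summing the five equations, the quadratic terms cancel and the cross-compartment transfers telescope, leaving $\partial_t N - \nabla\cdot(\text{diffusive flux}) = \Lambda - \mu N$ for $N = S+E+I+Q+R$; integrating over $\Omega$ and using the Neumann conditions eliminates the flux and yields $\tfrac{d}{dt}\|N\|_{L^1(\Omega)} = \Lambda|\Omega| - \mu\|N\|_{L^1(\Omega)}$, hence a uniform-in-time $L^1$ bound on $N$ and thus on each nonnegative component. The main obstacle is upgrading this conserved-mass control to the pointwise $L^\infty(\mathcal{U})$ estimate, since the distinct diffusion coefficients prevent $N$ from obeying a single scalar maximum principle and the terms $\beta_1 SE,\ \beta_2 SI$ are quadratic. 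I would close this gap by a Pierre-type duality estimate, which turns the $L^1$ mass bound into an $L^2(\mathcal{U})$ bound for each component, followed by an Alikakos/Moser $L^p\to L^\infty$ iteration; the quadratic growth of $\mathcal{F}$ together with $m\le 3$ makes the bootstrap closeable.

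With the $L^\infty(\mathcal{U})$ bound secured, global existence follows from the blow-up alternative, and the stated regularity is obtained by bootstrapping. Testing each equation with $\partial_t\psi_i$ and with $-\Delta\psi_i$ and absorbing the now-bounded reaction terms via Young's inequality and the Gronwall lemma delivers the $L^\infty(0,T;H^1)\cap L^2(0,T;H^2)\cap W^{1,2}(0,T;L^2)$ regularity and the finite bounds asserted in the second display; maximal $L^2$-parabolic regularity for the Neumann Laplacian justifies each step. I expect the $L^\infty$ estimate of the previous paragraph to be the only genuinely delicate point, as local existence, positivity, and the energy bootstrap are all routine once it is in hand.
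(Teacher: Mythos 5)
Your proposal is correct in outline but follows a genuinely different route from the paper in both substantive steps. The paper's own proof is only a sketch deferring to \cite{Zinihi2025S}: there, boundedness is obtained by comparison with two auxiliary Cauchy problems shifted by $\pm\eta$ (where $\eta$ dominates the initial data and the reaction terms), so the $L^\infty(\mathcal{U})$ estimate comes from a super-/subsolution argument in the semigroup framework rather than from mass control; positivity is proved by the negative-part energy method, showing $S^-=E^-=I^-=Q^-=R^-=0$ via Cauchy--Schwarz and Gronwall, rather than by your quasi-positivity/invariant-region argument; and existence--uniqueness is then read off from strong ellipticity and Lipschitz continuity of the (by then bounded) nonlinearity, much as in your final energy bootstrap. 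Your route is more self-contained on the structural side --- the quasi-positivity check on the faces of the nonnegative orthant and the cancellation giving $\tfrac{d}{dt}\|N\|_{L^1(\Omega)}=\Lambda|\Omega|-\mu\|N\|_{L^1(\Omega)}$ are both correct and make the biological feasibility transparent --- whereas the paper's comparison argument reaches $L^\infty$ more directly and avoids the duality machinery entirely.

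One caveat on the step you yourself flag as delicate: the chain ``$L^1$ mass bound $\Rightarrow$ duality $\Rightarrow L^2(\mathcal{U})$ $\Rightarrow$ Moser iteration $\Rightarrow L^\infty$'' does not close as literally stated. With only $L^2(\mathcal{U})$ control of the components, the quadratic terms $\beta_1 SE$ and $\beta_2 SI$ lie merely in $L^1(\mathcal{U})$, which is below the integrability threshold $p>(m+2)/2$ needed to launch an Alikakos--Moser iteration. To close it you must either exploit the system's structure (the quadratic terms in the $S$-equation are nonpositive once positivity is known, so $S$ can be bounded by comparison, after which the $E$- and $I$-equations become linear with bounded coefficients and the remaining equations are triangular), or invoke the improved $L^{2+\varepsilon}$ duality estimates available for quadratic reaction--diffusion systems before bootstrapping. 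Either repair is standard, but as written the middle of your argument has a gap that the paper's cruder $\pm\eta$ comparison sidesteps.
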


%%%%%%%%%%%%%%%%%%%%%%%%%%%%%%% Proof of Theorem 1
\begin{proof}
The proof is based directly on the methods detailed in \cite[Pages 8--11]{Zinihi2025S}. We outline here the main steps

\textbf{Boundedness:} To establish this property, we introduce a $C_0$-semigroup generated by the Laplace operator. 
Let $\eta$ denote the maximum of the initial data and the second member in each equation. We construct two auxiliary Cauchy problems: one with a shifted upper bound ($+\eta$) and the other with a lower bound ($-\eta$). 
Using semigroup theory and standard a-priori estimates, we deduce that the solution to the reaction-diffusion system remains bounded, i.e. % that is
\begin{equation*}
S, E, I, Q, R \in L^\infty\bigl(\mathcal{U}\bigr).
\end{equation*}

\textbf{Positivity:} Assuming strictly positive initial data, we use integral estimates and classical comparison principles to show that all solution components remain nonnegative for all time. 
In particular, we prove that the negative part of each variable vanishes, i.e., $S^- = E^- = I^- = Q^- = R^- = 0$, by applying the Cauchy–Schwarz and Gronwall inequalities. 
This ensures the biological feasibility of the model.

\textbf{Existence and Uniqueness:} By exploiting the strong ellipticity of the Laplace operator and the Lipschitz continuity of the second member, we show that the system admits a unique strong solution in the space
\begin{equation*}
S, E, I, Q, R \in W^{1,2}\bigl(0, T; L^2(\Omega)\bigr) \cap L^2\bigl(0, T; H^2(\Omega)\bigr) \cap L^\infty\bigl(0, T; H^1(\Omega)\bigr).
\end{equation*}

For a complete and rigorous treatment of the analysis, the reader is referred to the comprehensive work \cite{Zinihi2025S}, which provides a detailed proof and additional mathematical background.
\end{proof}

The SEIQR reaction-diffusion model~\eqref{E2.1}--\eqref{E2.3} is well-posed in the Hadamard sense:
it admits a unique, global, bounded, and nonnegative solution that depends continuously on the initial data. This provides a solid theoretical foundation for further qualitative and numerical investigations of the model.

%%%%%%%%%%%%%%%%%%%%%%%%%%
\subsection{Basic Reproduction Number}
% Section 3 in \cite{Cheng2021}. Also \cite{Magal2}?
The basic reproduction number $\mathcal{R}_0$ is a key threshold parameter that determines whether an infectious disease can invade and persist in a susceptible population. 
In this section, we compute % the basic reproduction number
$\mathcal{R}_0$ for the proposed model using the \textit{next generation matrix} (NGM) approach introduced by \cite{Diekmann1990, vandenDriessche2002}.

%%%%%%%%%%%%%%%%%%%%%%%%%%%%%%%%
\begin{remark}
In this study, the model parameters (including transmission coefficients and recovery rates) are spatially constant. 
Thus, diffusion does not affect the value of $\mathcal{R}_0$. 
In fact, it has been demonstrated that, in this scenario, the basic reproduction number of the reaction-diffusion epidemic model is identical to that of the corresponding ODE model \cite{Yang2020, Yang2023}.
Therefore, the NGM method is sufficient and rigorous for computing %the basic reproduction number 
$\mathcal{R}_0$ in our case. 
In contrast, methods based on integral operators are more suitable when spatial heterogeneity is present in the model parameters.
\end{remark}

The Jacobian matrix of \eqref{E2.1}, evaluated at the point $(S, E, I, Q, R)$, and excluding diffusion effects, is given by
\begin{equation*}
  \mathcal{J} = \begin{pmatrix}
  -\mu - \beta_1 E - \beta_2 I & -\beta_1 S & -\beta_2 S & \rho & 0\\
   \beta_1 E & \beta_1 S - (\mu + \delta) & 0 & 0 & 0\\
  \beta_2 I & \delta & \beta_2 S - (\mu + \gamma) & 0 & 0\\
   0 & 0 & \gamma & -(\mu + \rho + \alpha) & 0\\
    0 & 0 & 0 & \alpha & -\mu
\end{pmatrix}.
\end{equation*}
This matrix is decomposed into the sum of two parts:
\begin{itemize}
\item[$(i)$] The transmission matrix $\mathcal{T}$, which contains all terms representing new infections
\begin{equation*}
  \mathcal{T} = \begin{pmatrix}
  - \beta_1 E - \beta_2 I & -\beta_1 S & -\beta_2 S & 0 & 0\\
   \beta_1 E & \beta_1 S & 0 & 0 & 0\\
   \beta_2 I & 0 & \beta_2 S & 0 & 0\\
   0 & 0 & 0 & 0 & 0\\
    0 & 0 & 0 & 0 & 0
\end{pmatrix}.
\end{equation*}

\item[$(ii)$] The transition matrix $\mathcal{K}$, which captures the rates of transitions between infected classes, as well as outflows due to recovery, death, or movement to other compartments
\begin{equation*}
\mathcal{K} = \begin{pmatrix}
-\mu & 0 & 0 & \rho & 0\\
0 & -(\mu + \delta) & 0 & 0 & 0\\
0 & \delta & -(\mu + \gamma) & 0 & 0\\
0 & 0 & \gamma & -(\mu + \rho + \alpha) & 0\\
0 & 0 & 0 & \alpha & -\mu
\end{pmatrix}.
\end{equation*}
\end{itemize}

In order to apply the NGM method, we must first identify the infected compartments of the model. 
In our case, these are the exposed and infected classes. At the steady-state solution of \eqref{E2.1}, i.e.\ the disease-free equilibrium (DFE) $\bigl(\frac{\Lambda}{\mu}, 0, 0, 0, 0\bigr)$, we extract the relevant $2 \times 2$ submatrices $\widetilde{\mathcal{T}}$ and $\widetilde{\mathcal{K}}$ corresponding to the infected states $E$ and $I$. 
Specifically, the matrices are expressed as
\begin{equation*}
  \widetilde{\mathcal{T}} = 
  \begin{pmatrix}
  \frac{\beta_1 \Lambda}{\mu} & 0 \\
  0 & \frac{\beta_2 \Lambda}{\mu}
   \end{pmatrix} \ \text{ and } \
   \widetilde{\mathcal{K}} = - 
   \begin{pmatrix}
   \mu + \delta & 0 \\
    -\delta & \mu + \gamma
    \end{pmatrix}.
\end{equation*}
Thus, the next generation matrix $\mathcal{N}_\mathcal{GM}$ is computed as
\begin{equation*}
  \mathcal{N}_\mathcal{GM} = -\widetilde{\mathcal{T}}\widetilde{\mathcal{K}}^{-1} =  
\frac{\Lambda}{\mu(\mu + \delta)(\mu + \gamma)} \begin{pmatrix}
\beta_1 (\mu + \gamma) & 0 \\
\beta_2 \delta & \beta_2 (\mu + \delta)
\end{pmatrix}.
\end{equation*}
The basic reproduction number $\mathcal{R}_0$ is defined as the spectral radius of the matrix $\mathcal{N}_\mathcal{GM}$. Because the matrix is triangular, its eigenvalues are simply the diagonal entries. Therefore, we obtain
\begin{equation*}
   \mathcal{R}_0 = \max\Bigl\{\frac{\beta_1\Lambda}{\mu(\mu + \delta)}, \ \frac{\beta_2\Lambda}{\mu(\mu + \gamma)} \Bigr\}.
\end{equation*}
This expression reveals the contributions of exposed and infected individuals to disease transmission. 
The term $\tfrac{\beta_1\Lambda}{\mu(\mu + \delta)}$ reflects the average number of secondary infections produced by an exposed individual during their latency period. 
The term $\tfrac{\beta_2\Lambda}{\mu(\mu + \gamma)}$ corresponds to the same quantity for infected individuals.

%%%%%%%%%%%%%%%%%%%%%%%%%%%%%%%%
\begin{remark}
In general, the basic reproduction number $\mathcal{R}_0$ characterizes a disease's potential to invade and persist in a population.
\begin{enumerate}
\item[$i.$] If $R_0 < 1$, the disease cannot invade the population, and the disease-free equilibrium is locally asymptotically stable.

\item[$ii.$] If $R_0 > 1$, the disease can spread within the population, potentially leading to an endemic equilibrium.

\item[$iii.$] If $\mathcal{R}_0 = 1$, the system is at a threshold, and nonlinear effects may determine whether the disease dies out or persists. Bifurcation analysis is often required to fully understand the dynamics.
\end{enumerate}
% ----------------------------
% In general, if $\mathcal{R}_0 < 1$, the disease cannot invade the population, and the disease-free equilibrium is locally asymptotically stable. Conversely, if $\mathcal{R}_0 > 1$, the disease can spread within the population, potentially leading to the emergence of an endemic equilibrium.
% ----------------------------
% If $\mathcal{R}_0 < 1$, the disease cannot invade the population and the disease-free equilibrium is locally asymptotically stable. Conversely, if $\mathcal{R}_0 > 1$, the disease can spread and an endemic equilibrium may arise.
\end{remark}

% ================================================================ Section 3

\section{Nonstandard Finite Difference Scheme}\label{S3}
This section outlines the principles of %nonstandard finite difference (NSFD) 
NSFD schemes. 
NSFD methods are designed to preserve the key qualitative properties of the original differential equations, such as positivity and conservation laws. 
Consequently, the numerical solutions remain bounded and stable. 
%Notably, the NSFD scheme retains the same fixed points as the continuous ordinary differential equation (ODE) systems (1) and (2). 
For a detailed account of the discretization strategy, see Mickens's foundational work \cite{Mickens1}.

For simplicity, we consider 1D and 2D settings with a uniform grid, $x_j$ and $y_l$, and a spatial step size, $h$, where $h = \Delta x = \Delta y$. We use the standard notation for the pointwise numerical approximation, e.g., $S_{j,l}^n \approx S(t_n, x_j, y_l)$.

%%%%%%%%%%%%%%%%%%%%%%%%%%%%%%%%%%%%%%%%%%%%%%%%%%%%%%%%%%%%%%%%%%% 
\subsection{NSFD Schemes}\label{subsec:nsfd}
Here, an NSFD scheme is formulated to satisfy the fundamental and essential non-negativity condition. 
Let us recall that for $\Lambda=\mu=0$ the system~\eqref{E2.1}--\eqref{E2.3} reduces to a system of 5 decoupled pure heat equations with known exact solution that can be compared to the NSFD solution.
% We will also check that the equilibrium points of the ODE model also appear in the proposed NSFD scheme.  
%
Finally, in Section~\ref{subsec:denom}, we will discuss the so-called denominator function $\phi(k)$. 
% such that we obtain the correct long-time behaviour.
% We refer to \cite{berkhahn22, treibert22}, where we established an NSFD scheme for a similar compartment model as here.  

We recall that a computational %numerical 
scheme for a first-order differential equations system is referred to as a \textit{nonstandard finite difference (NSFD) scheme} if it satisfies at least one of the conditions outlined by Mickens \cite{Mickens1}.

%\Tachraf{To avoid using $n$ for both the spatial dimension and the time step index, we will use $m$ for the spatial dimension instead. OK}

\begin{itemize}

\item Order Consistency:
    The order of the discrete derivatives must match the order of the corresponding derivatives in the original differential equations.

\item Nontrivial Denominator Functions:
    Discrete approximations of derivatives typically involve nontrivial denominator functions. Specifically:
\begin{itemize}
\item First-order derivatives are typically simulated using a generalized forward difference (a modified forward Euler method), given by:
        \begin{equation*}
        \dt{u}\Big|_{t=t_n} \approx \frac{u^{n+1} - u^n}{\phi(k)},
        \end{equation*}
        where $u^n\approx u(t_n)$, on a uniform time grid $t_n=n\,k$, for all $n=0,1,2\dots$, with %step size 
        $k=\Delta t$. 

\item The function $\phi\equiv\phi(k)>0$, known as the \textit{denominator function}, satisfies the consistency condition $\phi(k)=k+\mathcal{O}(k^2)$.
        It is chosen to ensure that the discrete solution exhibits the same asymptotic behavior as the continuous one (s detailed in Section~\ref{subsec:denom}).
        \end{itemize}

\item Nonlocal Approximations of Nonlinear Terms:
    Nonlinear terms are discretized using nonlocal representations, i.e.\ functions involving multiple grid points. Examples include:
    $u^3(t_n) \approx (u^n)^2 \,u^{n+1}$ or $u^2(t_n) \approx u^n\,u^{n+1}$.

\item Preservation of Qualitative Properties:
    Any special conditions satisfied by the continuous model or its solution should also hold for the discrete model and its solution. These include, for example, positivity, convexity (as in financial models), and the preservation of equilibrium points and their local asymptotic stability.
\end{itemize}

%\noindent
\begin{remark}
More generally, derivatives in NSFD schemes are approximated by expressions of the form \cite{Mickens1}:
\begin{equation}\label{eq:discrete}
    \dt{u(t)}\Big|_{t=t_n} \to \frac{u^{n+1}-\psi(k)u^n}{\phi(k)} \, ,
\end{equation} 
where $\psi(k) = 1 + \mathcal{O}(k)$. This generalized time discretization aims to accurately capture the long-term asymptotic behavior of the solution.
\end{remark}

%%%%%%%%%%%%
\subsection{NSFD Scheme for the PDE System}
%We propose the following NSFD discretization for solving the PDE model
We introduce the following NSFD scheme for solving the PDE model~\eqref{E2.1}--\eqref{E2.3}
%%%%%%%%%%%%%%%%%%%%%%%%%%%%%%%%%%%%%%%%%%%%%%%%%%%%%%%%%%%%%%%%%%%%%%%%%
%% it was re-typed above in an align environment (instead of split) to allow for proper pagebreaks

\begin{comment}
\Tmatthias{there is some small issue with the $\rho Q$ term, but it is OK, maybe natural, as it is some kind of reflux}
\Tachraf{ In real-world scenarios, the return rate of quarantined to the susceptible is generally negligible. So, if you would like us to remove it from the model, I believe it would not cause any significant issue.}
\Tmatthias{Let us keep the Q term (since it is also in part 1, and make a remark that one could neglect it. without Q the terms on the right hand side of the scheme cancels out (when summing up for total population) and we have pure diffusion (if birth rate = natural death rate)}
\end{comment}

\begin{align*}
    \frac{S_{j,l}^{n+1}-S_{j,l}^n}{\phi(k)} 
    &= \lambda_S \Delta_h^2 S_{j,l}^{n} +\Lambda  + \rho Q_{j,l}^n
    - (\beta_1 E_{j,l}^n + \beta_2 I_{j,l}^n +\mu) S_{j,l}^{n+1}\\
    %%%%%%%%
    \frac{E_{j,l}^{n+1}-E_{j,l}^n}{ \phi(k) } 
    &= \lambda_E \Delta_h^2 E_{j,l}^n + \beta_1 E_{j,l}^n S_{j,l}^{n+1} - (\delta + \mu) E_{j,l}^{n+1}, \\
    %%%%%%%
    \frac{I_{j,l}^{n+1}-I_{j,l}^n}{ \phi(k) } 
    &= \lambda_I \Delta_h^2 I_{j,l}^n + \beta_2 I_{j,l}^n S_{j,l}^{n+1} + \delta E_{j,l}^{n+1} - (\gamma + \mu) I_{j,l}^{n+1}, 
     \stepcounter{equation}\tag{\theequation}\label{eq:model3}\\
    %%%%%%%
    \frac{Q_{j,l}^{n+1}-Q_{j,l}^n}{\phi(k) } 
    &= \lambda_Q \Delta_h^2 Q_{j,l}^n + \gamma I_{j,l}^{n+1} - (\alpha + \rho + \mu)Q_{j,l}^{n+1}, \\
    %%%%%%%
    \frac{R_{j,l}^{n+1}-R_{j,l}^n}{ \phi(k) } 
    &= \lambda_R \Delta_h^2 R_{j,l}^n +  \alpha Q_{j,l}^{n+1} - \mu R_{j,l}^{n+1},
\end{align*}  
with the denominator function
\begin{equation} \label{eq:phi_nor_c} 
    \phi(k) =\frac{e^{\mu k}-1}{\mu}.
\end{equation}%
In the NSFD scheme \eqref{eq:model3} $\Delta_h^2$ denotes the nonstandard discretization of the Laplacian, proposed by Ben-Charpentier and Kojouharov~\cite{Chen-Charpentier13}, i.e.
\begin{equation}\label{eq:LaplacianNSFD}
    \Delta_h^2 S_{j,l}^{n}
    = \frac{S_{j+1,l}^n - 2S_{j,l}^{n+1} + S_{j-1, l}^n}{h^2} + \frac{S_{j,l+1}^n - 2S_{j,l}^{n+1} + S_{j,l-1}^n}{h^2}.
\end{equation}
While this approach, which is common in NSFD schemes for parabolic PDEs, and preserves the positivity of the solution, this 'skew' discretization introduces a undesirable coupling between space and time errors. By Taylor expansion we can write the truncation error of the nonstandard discretization of the Laplacian \eqref{eq:LaplacianNSFD} as
\begin{equation}\label{eq:LaplacianNSFDc}
    \Delta_h^2 S_{j,l}^{n}
    - \Delta S(t_n,x_j,y_l)
    =-4r\pp{S}{t}\Big|_{(x_j,y_l)}^{t_n}
    - 4kr\ppp{S}{t}\Big|_{(x_j,y_l)}^{t_n}
    %+\frac{h^2}{12}\Bigl(\frac{\partial^4 S}{\partial {x}^4}+\frac{\partial^4 S}{\partial {y}^4}\Bigr)+\dots
     +\frac{k}{12r}\Bigl(\frac{\partial^4 S}{\partial {x}^4}\Big|_{(x_j,y_l)}^{t_n}+\frac{\partial^4 S}{\partial {y}^4}\Big|_{(x_j,y_l)}^{t_n}\Bigr)+\gO(k^2),
\end{equation}
with the constant mesh ratio $r=k/h^2$. The leading error term is zeroth-order in $k$, which means that it will not vanish as $k\to0$. So, this discrete operator \eqref{eq:LaplacianNSFD} is not consistent with the continuous Laplacian $\Delta S$. 
% It reflects a global NSFD design philosophy, not a local accuracy principle.  
This NSFD Laplacian \eqref{eq:LaplacianNSFD} sacrifices consistency in favor of qualitative properties like positivity.
 We will study this topic in more detail in Section~\ref{S4}.

Let us briefly comment on the discretization of the nonlinear terms, which are quadratic in this case. 
For instance, in the first line of \eqref{eq:model3}, the nonlinear contact term $\beta_2 I(t,x)S(t,x)$ from \eqref{E2.1} is discretized as $\beta_2 I_{j,l}^n S_{j,l}^{n+1}$, rather than $I_{j,l}^n S_{j,l}^n$ or $I_{j,l}^{n+1} S_{j,l}^{n+1}$. The guiding principle is that exactly one factor corresponding to the variable with a time derivative ($S$) must be evaluated at the new time level $n+1$. 
This approach ensures that the resulting scheme preserves positivity, as shown in \eqref{eq:model2}.

To maintain an explicit sequential computation, all other variables in a given term are taken from the previous time level unless they have already been updated in earlier equations. 
% For example, $I^{n+1} S^{n+1}$  in the sixth line.
When applicable, discrete conservation properties % (e.g., total human and vector populations) 
should be preserved in the discretization.

Although the initial system \eqref{eq:model3} may appear implicit, it can be reformulated into an explicit scheme. Each variable at %the 
($n+1$) %-th time level 
can be computed directly using known values from previous time steps, in the order of the equations in the system. In 2D the NSFD scheme \eqref{eq:model3} is reformulated as follows:
%%%%%%%%%%% original formulation without mesh ratio r(k)
\begin{comment}
\begin{align*}
    S_{j,l}^{n+1} &= \frac{S_{j,l}^n+\phi(k)\,
    \bigl( \frac{\lambda_S}{h^2} (S_{j-1,l}^{n}+S_{j+1,l}^{n}+S_{j,l-1}^{n}+S_{j,l+1}^{n})  + \Lambda +\rho Q_{j,l}^n\bigr)}
    {1+\phi(k)\,\bigl(4\frac{\lambda_S}{h^2} +\beta_1 E_{j,l}^n + \beta_2 I_{j,l}^n +\mu\bigr) }, \\
    %%%%%%%%
    E_{j,l}^{n+1} &= \frac{E_{j,l}^n+\phi(k)\, 
    \bigl( \frac{\lambda_E}{h^2} (E_{j-1,l}^{n}+E_{j+1,l}^{n}+E_{j,l-1}^{n}+E_{j,l+1}^{n}) + \beta_1 E_{j,l}^n S_{j,l}^{n+1} \bigr)} 
    {1+\phi(k)\,(4\frac{\lambda_E}{h^2} +\delta + \mu) }, \\
    %%%%%%%
    I_{j,l}^{n+1} &= \frac{I_{j,l}^n+\phi(k)\, 
    \bigl( \frac{\lambda_I}{h^2} (I_{j-1,l}^{n}+I_{j+1,l}^{n}+I_{j,l-1}^{n}+I_{j,l+1}^{n}) + \beta_2 I_{j,l}^n S_{j,l}^{n+1} + \delta E_{j,l}^{n+1} \bigr)} 
    {1+\phi(k)\,(4 \frac{\lambda_I}{h^2} +\gamma + \mu) }, 
     \stepcounter{equation}\tag{\theequation}\label{eq:model2}\\
    %%%%%%%
    Q_{j,l}^{n+1} &= \frac{Q_{j,l}^n+\phi(k)\, 
    \bigl( \frac{\lambda_Q}{h^2}  (Q_{j-1,l}^{n}+Q_{j+1,l}^{n}+Q_{j,l-1}^{n}+Q_{j,l+1}^{n}) + \gamma I_{j,l}^{n+1}\bigr)} 
    {1+\phi(k)\,(4\frac{\lambda_Q}{h^2} +\alpha + \rho + \mu) },  \\
    %%%%%%%
    R_{j,l}^{n+1} &= \frac{R_{j,l}^n+\phi(k)\,
    \bigl( \frac{\lambda_R}{h^2}  (R_{j-1,l}^{n}+R_{j+1,l}^{n}+R_{j,l-1}^{n}+R_{j,l+1}^{n}) + \alpha Q_{j,l}^{n+1}\bigr)}
    {1+\phi(k)\,(4\frac{\lambda_R}{h^2}+\mu) },
\end{align*} 
\end{comment}
\begin{align*}
    S_{j,l}^{n+1} &= \frac{S_{j,l}^n 
    +\lambda_S r(k) (S_{j-1,l}^{n}+S_{j+1,l}^{n}+S_{j,l-1}^{n}+S_{j,l+1}^{n})   
    +\phi(k) (\Lambda +\rho Q_{j,l}^n)}
    {1 +4\lambda_S r(k) + \phi(k) (\beta_1 E_{j,l}^n + \beta_2 I_{j,l}^n +\mu) }, \\
    %%%%%%%%
    E_{j,l}^{n+1} &= \frac{E_{j,l}^n
    +\lambda_E r(k) (E_{j-1,l}^{n}+E_{j+1,l}^{n}+E_{j,l-1}^{n}+E_{j,l+1}^{n}) 
    + \phi(k) \beta_1 E_{j,l}^n S_{j,l}^{n+1} } 
    {1 +4\lambda_E r(k) +\phi(k)(\delta + \mu) }, \\
    %%%%%%%
    I_{j,l}^{n+1} &= \frac{I_{j,l}^n
    +\lambda_I r(k) (I_{j-1,l}^{n}+I_{j+1,l}^{n}+I_{j,l-1}^{n}+I_{j,l+1}^{n}) 
    + \phi(k)(\beta_2 I_{j,l}^n S_{j,l}^{n+1} + \delta E_{j,l}^{n+1} )} 
    {1 +4\lambda_I r(k) +\phi(k)(\gamma + \mu) }, 
     \stepcounter{equation}\tag{\theequation}\label{eq:model2}\\
    %%%%%%%
    Q_{j,l}^{n+1} &= \frac{Q_{j,l}^n 
    +\lambda_Q r(k) (Q_{j-1,l}^{n}+Q_{j+1,l}^{n}+Q_{j,l-1}^{n}+Q_{j,l+1}^{n})
    + \phi(k) \gamma I_{j,l}^{n+1}} 
    {1 +4\lambda_Q r(k) + \phi(k)(\alpha + \rho + \mu) },  \\
    %%%%%%%
    R_{j,l}^{n+1} &= \frac{R_{j,l}^n
    +\lambda_R r(k)  (R_{j-1,l}^{n}+R_{j+1,l}^{n}+R_{j,l-1}^{n}+R_{j,l+1}^{n}) 
    + \phi(k)\alpha Q_{j,l}^{n+1}}
    {1+4\lambda_R r(k) +\phi(k) \mu },
\end{align*} 
where we have introduced the \textit{generalized parabolic mesh ratio} $r(k)=\phi(k)/h^2$.

%%%%%%%%%%%%%%%%%%%%%%%%%%%%%%%%
\begin{remark} In one space dimension this NSFD scheme \eqref{eq:model2} reads
%% version without r(k)
\begin{comment}
    \begin{align*}
    S_{j}^{n+1} &= \frac{S_{j}^n+\phi(k)\,
    \bigl( \frac{\lambda_S}{h^2} (S_{j-1}^{n}+S_{j+1}^{n})  + \Lambda +\rho Q_{j}^n\bigr)}
    {1+\phi(k)\,\bigl(2\frac{\lambda_S}{h^2} +\beta_1 E_{j}^n + \beta_2 I_{j}^n +\mu\bigr) }, \\
    %%%%%%%%
    E_{j}^{n+1} &= \frac{E_{j}^n+\phi(k)\, 
    \bigl( \frac{\lambda_E}{h^2} (E_{j-1}^{n}+E_{j+1}^{n}) + \beta_1 E_{j}^n S_{j}^{n+1} \bigr)} 
    {1+\phi(k)\,(2\frac{\lambda_E}{h^2} +\delta + \mu) }, \\
    %%%%%%%
    I_{j}^{n+1} &= \frac{I_{j}^n+\phi(k)\, 
    \bigl( \frac{\lambda_I}{h^2} (I_{j-1}^{n}+I_{j+1}^{n}) + \beta_2 I_{j}^n S_{j}^{n+1} + \delta E_{j}^{n+1} \bigr)} 
    {1+\phi(k)\,(2 \frac{\lambda_I}{h^2} +\gamma + \mu) }, 
     \stepcounter{equation}\tag{\theequation}\label{eq:model21d}\\
    %%%%%%%
    Q_{j}^{n+1} &= \frac{Q_{j}^n+\phi(k)\, 
    \bigl( \frac{\lambda_Q}{h^2}  (Q_{j-1}^{n}+Q_{j+1}^{n}) + \gamma I_{j}^{n+1}\bigr)} 
    {1+\phi(k)\,(2\frac{\lambda_Q}{h^2} +\alpha + \rho + \mu) },  \\
    %%%%%%%
    R_{j}^{n+1} &= \frac{R_{j}^n+\phi(k)\,
    \bigl( \frac{\lambda_R}{h^2}  (R_{j-1}^{n}+R_{j+1}^{n}) + \alpha Q_{j}^{n+1}\bigr)}
    {1+\phi(k)\,(2\frac{\lambda_R}{h^2}+\mu) }. 
\end{align*}  
\end{comment}
\begin{align*}
    S_{j}^{n+1} &= \frac{S_{j}^n
    +\lambda_S r(k) (S_{j-1}^{n}+S_{j+1}^{n})  
    + \phi(k)(\Lambda +\rho Q_{j}^n)}
    {1+2\lambda_S r(k) + \phi(k)(\beta_1 E_{j}^n + \beta_2 I_{j}^n +\mu) }, \\
    %%%%%%%%
    E_{j}^{n+1} &= \frac{E_{j}^n
    +\lambda_E r(k) (E_{j-1}^{n}+E_{j+1}^{n}) 
    + \phi(k) \beta_1 E_{j}^n S_{j}^{n+1} } 
    {1+2\lambda_E r(k) + \phi(k) (\delta + \mu) }, \\
    %%%%%%%
    I_{j}^{n+1} &= \frac{I_{j}^n 
    +\lambda_I r(k) (I_{j-1}^{n}+I_{j+1}^{n}) 
    + \phi(k) (\beta_2 I_{j}^n S_{j}^{n+1} + \delta E_{j}^{n+1} )} 
    {1+2 \lambda_I r(k)  + \phi(k) (\gamma + \mu) }, 
     \stepcounter{equation}\tag{\theequation}\label{eq:model21d}\\
    %%%%%%%
    Q_{j}^{n+1} &= \frac{Q_{j}^n
    +\lambda_Q r(k) (Q_{j-1}^{n}+Q_{j+1}^{n}) 
    + \phi(k) \gamma I_{j}^{n+1} } 
    {1+2\lambda_Q r(k) + \phi(k)(\alpha + \rho + \mu) },  \\
    %%%%%%%
    R_{j}^{n+1} &= \frac{R_{j}^n
    +\lambda_R r(k)  (R_{j-1}^{n}+R_{j+1}^{n})
    + \phi(k) \alpha Q_{j}^{n+1} }
    {1+2\lambda_R r(k) +\phi(k)\mu }. 
\end{align*}  
\end{remark}

The computations must be carried out in the specified order. 
In epidemic modeling, it is standard practice to assume that all parameters are non-negative, reflecting their real-world interpretations. 
Under this convention, and based on the explicit form in \eqref{eq:model2}, it is straightforward to verify that the scheme preserves positivity, provided that certain natural conditions on the parameters are satisfied.
The special choice of the  denominator function $\phi(k)$ is discussed in Section~\ref{ss:LTE}.

% ================================================================ Section 4

\section{Analysis of the NSFD Scheme}\label{S4}
In this section we will perform an analysis of the proposed NSFD scheme \eqref{eq:model2}. To do so, 
we start with the local truncation error in Section~\ref{ss:LTE}.
Then we discuss in Section~\ref{subsec:denom} the denominator function and
finally, we study in Section~\ref{ss:nonstan_discreteL} the consistency of the nonstandard discretization of the Laplacian.

%%%%%%%%%%%%%%%%%%%%%%%%
\subsection{Local Truncation Error}\label{ss:LTE}
First, we apply a Taylor series expansion in time to the components and use the consistency condition of the denominator function, e.g. for the first component
\begin{equation}\label{eq:LTE1}
\begin{split}
    S_{j,l}^{n+1}=S_{j,l}^{n} + k\pp{S}{t}\Big|_{(x_j,y_l)}^{t_n} +\gO(k^2)
                 &=S_{j,l}^{n} + \phi(k)\pp{S}{t}\Big|_{(x_j,y_l)}^{t_n} +\gO(k^2)\\
                 &=S_{j,l}^{n} + r(k)h^2\pp{S}{t}\Big|_{(x_j,y_l)}^{t_n} +\gO(k^2). 
\end{split}
\end{equation}
Inserting in the scheme \eqref{eq:model3} yields
\begin{equation}\label{eq:LTE2}
\begin{split}
    S_{j,l}^n  + \phi(k)\pp{S}{t}\Big|_{(x_j,y_l)}^{t_n}
    &= S_{j,l}^n + \lambda_S r(k)\biggl(S_{j+1,l}^n + S_{j-1, l}^n +S_{j,l+1}^n + S_{j,l-1}^n - 4\Bigl( S_{j,l}^{n} + k\pp{S}{t}\Big|_{(x_j,y_l)}^{t_n}\Bigr) \biggr)\\ 
     &\quad  +\phi(k) \biggl(\Lambda  + \rho Q_{j,l}^n
    - (\beta_1 E_{j,l}^n + \beta_2 I_{j,l}^n +\mu) \Bigl( S_{j,l}^{n} + k\pp{S}{t}\Big|_{(x_j,y_l)}^{t_n}\Bigr)\biggr)+\gO(k^2).
\end{split}
\end{equation}
Next, we use the fact, that on the right hand side \eqref{eq:LTE2} we have a standard second order in space semi-discretization of the first component of the PDE, i.e. 
\begin{equation}\label{eq:LTE3}
\begin{split}
   \phi(k)\pp{S}{t}\Big|_{(x_j,y_l)}^{t_n}
    &= \phi(k)\pp{S}{t}\Big|_{(x_j,y_l)}^{t_n}
    - 4\lambda_S r(k) k\pp{S}{t}\Big|_{(x_j,y_l)}^{t_n} \\ 
     &\qquad  -\phi(k) k\bigl(\beta_1 E_{j,l}^n + \beta_2 I_{j,l}^n +\mu\bigr) \Bigl(\pp{S}{t}\Big|_{(x_j,y_l)}^{t_n}\Bigr)+\gO(k^2).
\end{split}
\end{equation}
Comparing coefficients of time derivative and dividing by $\phi(k)$ yields 
\begin{equation}\label{eq:LTE4}
%\begin{split}
 1 = 1 - 4\lambda_S \frac{k}{h^2}  - k\bigl(\beta_1 E_{j,l}^n + \beta_2 I_{j,l}^n +\mu\bigr),
%\end{split}
\end{equation}
and similarly for the other components, with different diffusion constants $\lambda$. 
Hence, since the parabolic mesh ratio $k/h^2$ is assumed to be constant, we conclude that the NSFD scheme \eqref{eq:model3} cannot be consistent of first order in time, due to the discretization \eqref{eq:LaplacianNSFD}.

%%%%%%%%%%%%%%%%%%%%%%%%%%%%%%%%%%%%%%% Subsection 4.4 %%%%%%%%%%%%
\subsection{The Denominator Function}\label{subsec:denom}
%Finally, it only remains to correctly determine the denominator function $\phi(k)$.

%\Tachraf{Are there any steps to follow to determine it?
%\Tachraf{From the paper you shared with me, they used a Taylor expansion. I think we can do the same. Matthias; yes, we can proceed similarly.}
%}
% We note that 
Pasha, Nawaz and Arif \cite{Pasha} proposed a remedy for this above mentioned order reduction. They considered the condition
\eqref{eq:LTE4} in the form
\begin{equation}\label{eq:LTE4b}
 k = \phi_S(k) \bigl(1 - 4\lambda_S \frac{k}{h^2}  - k(\beta_1 E_{j,l}^n + \beta_2 I_{j,l}^n +\mu)\bigr).
\end{equation}
This resulted in a denominator function for the first component $S$:
\begin{equation}\label{eq:LTE4c}
 \phi_S(k) = k \bigl(1 - 4\lambda_S \frac{k}{h^2}  - k(\beta_1 E_{j,l}^n + \beta_2 I_{j,l}^n +\mu)\bigr)^{-1},
\end{equation}
and accordingly for the other components.
However, this choice \eqref{eq:LTE4c} depends on the component (and also on the location, $x_j$ and $y_l$, and even time instance $t_n$), which prevents discrete conservation properties for the total population when the individual equations for the components are summed up. 
With the parabolic mesh ratio $r=k/h^2$ we rewrite \eqref{eq:LTE4c} to
\begin{equation}\label{eq:LTE4cc}
%\begin{split}
 \phi_S(k) = \frac{k}{1 - 4r\lambda_S  - k(\beta_1 E_{j,l}^n + \beta_2 I_{j,l}^n +\mu)}
 =\frac{k}{a-kb_{j,l}^n}
% \end{split}
\end{equation}
with $a=1 - 4r\lambda_S$ and $b=b_{j,l}^n=\beta_1 E_{j,l}^n + \beta_2 I_{j,l}^n +\mu$.
Now, we apply a Neumann series expansion for small $k$, assuming $k|b_{j,l}^n| < |a|$:
\begin{equation}\label{eq:LTE4D}
 \frac{1}{a-kb_{j,l}^n}=\frac{1}{a}\frac{1}{1-\frac{kb}{a}}
 =\frac{1}{a}\sum_{q=0}^\infty \Bigl(\frac{kb}{a}\Bigr)^q
 =\frac{1}{a}+\frac{kb}{a^2}+\frac{(kb)^2}{a^3}+\dots.
\end{equation}
This gives the denominator function
\begin{equation}\label{eq:LTE4DD}
\phi_S(k)
 =\frac{k}{1-4r\lambda_S}+\frac{\beta_1 E_{j,l}^n + \beta_2 I_{j,l}^n +\mu}{(1-4r\lambda_S)^2}k^2+\frac{(\beta_1 E_{j,l}^n + \beta_2 I_{j,l}^n +\mu)^2}{(1-4r\lambda_S)^3}k^3+\dots,
\end{equation}
which shows that the consistency condition $\phi(k)=k+\mathcal{O}(k^2)$ is satisfied,
if the time step constraint $k< |a|/|b_{j,l}^n|$ is fulfilled.

Analogue calculations can be done for the four other components in the system \eqref{eq:model2} and thus we can formulate similar to \cite[Theorem~1]{Pasha}
\begin{theorem}
    The NSFD scheme \eqref{eq:model2} constructed for the system~\eqref{E2.1}--\eqref{E2.3} 
    using
    \begin{equation}\label{eq:LTE5}
%         \phi_\ell(k) = k \bigl(1 - 4\lambda_\ell \frac{k}{h^2} -k\mu -k(\beta_1 E_{j,l}^n + \beta_2 I_{j,l}^n )\bigr)^{-1},
          \phi_\ell(k) = k \bigl(1 - 4\lambda_\ell \frac{k}{h^2} -k\mu 
          -g_\ell(k)\bigr)^{-1},
    \end{equation}
    where $r=k/h^2=\text{const.}$ and 
    \begin{equation}
    g_\ell(k)=\begin{cases}
         k(\beta_1 E_{j,l}^n + \beta_2 I_{j,l}^n) & \text{if}\;\ell=S,\\  
         k\delta & \text{if}\;\ell=E,\\ 
         k\gamma & \text{if}\;\ell=I,\\ 
         k(\alpha + \rho) & \text{if}\;\ell=Q,\\ 
         0& \text{if}\;\ell=R,
    \end{cases}
   \end{equation}
   assuming the time step restrictions $|k\mu+g_\ell(k)| < |1 - 4r\lambda_\ell|$, $\ell=1,2,\dots,5$, are fulfilled,
   has the consistency order 1 in time and 2 in space.
\end{theorem}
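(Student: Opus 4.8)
The plan is to repeat, component by component, the local-truncation-error computation carried out for $S$ in Section~\ref{ss:LTE}, and then to read off the denominator $\phi_\ell$ that annihilates the leading error coefficient. For each $\ell\in\{S,E,I,Q,R\}$ I would insert the exact solution into the corresponding line of~\eqref{eq:model3}, expand every quantity sitting at time level $n+1$ through $\psi_\ell^{n+1}=\psi_\ell^n+k\,\pp{\psi_\ell}{t}+\gO(k^2)$ together with $\frac{\psi_\ell^{n+1}-\psi_\ell^n}{\phi_\ell(k)}=\frac{k}{\phi_\ell(k)}\pp{\psi_\ell}{t}+\gO(k)$, and expand the skew Laplacian~\eqref{eq:LaplacianNSFD} as in~\eqref{eq:LaplacianNSFDc}. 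As there, the \emph{skew} middle term $-2\psi_\ell^{n+1}$ of the stencil produces the zeroth-order-in-$k$ contribution $-4\lambda_\ell r\,\pp{\psi_\ell}{t}$ (with constant mesh ratio $r=k/h^2$), whereas the remaining five-point part supplies the genuine Laplacian together with an $\gO(h^2)$ spatial truncation.

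Next I would cancel all time-level-$n$ reaction and diffusion terms using the continuous equation~\eqref{E2.1} for the $\ell$-th component; only $\pp{\psi_\ell}{t}$ terms then survive, coming from the time difference $\tfrac{k}{\phi_\ell}$, the skew-Laplacian correction $4\lambda_\ell r$, and the \emph{diagonal} decay term at level $n+1$, whose coefficient is $\mu$ plus the $\ell$-specific outflow rate. Collecting them gives the residual coefficient
\begin{equation*}
\frac{k}{\phi_\ell(k)}-1+4\lambda_\ell r+k\mu+g_\ell(k),
\end{equation*}
which for $\ell=S$ is exactly the expression required to vanish in~\eqref{eq:LTE4} and shown there to be violated by a naive denominator. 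Imposing that it be zero is the defining relation~\eqref{eq:LTE4b} for the denominator, and solving for $\phi_\ell$ produces~\eqref{eq:LTE5}; the Neumann-series argument~\eqref{eq:LTE4D}--\eqref{eq:LTE4DD} then shows that $\phi_\ell(k)>0$ and $\phi_\ell(k)=\tfrac{k}{1-4r\lambda_\ell}+\gO(k^2)$ whenever the step restriction $|k\mu+g_\ell(k)|<|1-4r\lambda_\ell|$ holds, which is precisely what makes the series converge.

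The step needing the most care --- and the main obstacle --- is to verify that the off-diagonal couplings do not disturb this balance: these are the transfer terms $\delta E^{n+1}$, $\gamma I^{n+1}$, $\alpha Q^{n+1}$ in the $I$-, $Q$- and $R$-equations and the incidence gains $\beta_1 E^n S^{n+1}$, $\beta_2 I^n S^{n+1}$ in the $E$- and $I$-equations. After Taylor expansion each of them contributes a term of the form $(\text{rate})\,k\,\pp{\psi_m}{t}$ with $m\neq\ell$; hence each is $\gO(k)$, none has a zeroth-order-in-$k$ part, and none multiplies $\pp{\psi_\ell}{t}$, so they neither enter the diagonal balance nor the definition of $\phi_\ell$. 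This also explains why only $g_S$ carries the incidence rates: in the $S$-equation the incidence appears \emph{diagonally} as a loss on $S^{n+1}$, whereas in the $E$- and $I$-equations it is an off-diagonal gain. Once this is settled, the residual truncation error of every component is $\gO(k)+\gO(h^2)$; with $r=k/h^2$ held fixed both contributions are of the same magnitude, but their separation establishes consistency of order $1$ in time and order $2$ in space, in parallel with~\cite[Theorem~1]{Pasha}.
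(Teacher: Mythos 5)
Your argument is correct and follows essentially the same route as the paper: the component-wise local truncation error expansion of Section~\ref{ss:LTE}, the defining relation~\eqref{eq:LTE4b} obtained by forcing the residual coefficient of $\pp{\psi_\ell}{t}$ to vanish, and the Neumann-series verification~\eqref{eq:LTE4D}--\eqref{eq:LTE4DD} under the stated step restriction. Your explicit check that the off-diagonal transfer and incidence terms contribute only $\gO(k)$ corrections in the ``wrong'' derivatives, and hence do not enter $g_\ell$, is a detail the paper compresses into ``analogue calculations can be done for the four other components,'' and it correctly explains why the incidence rates appear only in $g_S$.
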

However, this is a circular argument, since this particular choice of denominator function results in the standard explicit finite difference method, i.e., no unconditional stability and no unconditional positivity of the solution.

%%%%%%%%%%%%%%%%%%%%%%%%%%%%%%%%
\subsection{Consistency of the discretized Laplacian}\label{ss:nonstan_discreteL}
In order to overcome the described consistency issues of the discretized Laplacian
\eqref{eq:LaplacianNSFD} let us include a perturbation $\delta(k)$ as follows
\begin{equation}\label{eq:LaplacianNSFDdelta}
    \Delta_h^2 S_{j,l}^{n}
    = \frac{S_{j+1,l}^n - 2\bigl(1+\delta(k)\bigr)\Psi(k)S_{j,l}^{n+1} + S_{j-1, l}^n}{h^2} + \frac{S_{j,l+1}^n - 2\bigl(1+\delta(k)\bigr)\Psi(k)S_{j,l}^{n+1} + S_{j,l-1}^n}{h^2}.
\end{equation}
% with a correction function $\delta(k)$, and assume constant mesh ratio $r = k/h^2$.
%Taylor expansion of the terms around $$(x_j, y_l, t_n)$$:
%\begin{align*}
%S_{j\pm 1, l}^n &= S \pm h S_x + \frac{h^2}{2} S_{xx} + \cdots, \\
%S_{j, l\pm 1}^n &= S \pm h S_y + \frac{h^2}{2} S_{yy} + \cdots, \\
%S_{j,l}^{n+1} &= S + k S_t + \frac{k^2}{2} S_{tt} + \cdots.
%\end{align*}
%
%Combining spatial terms:
%$$
%\frac{S_{j+1,l}^n + S_{j-1,l}^n + S_{j,l+1}^n + S_{j,l-1}^n - 4S}{h^2}
%= \Delta S + \frac{h^2}{12} \Delta^2 S + \mathcal{O}(h^4).
%$$
Assuming a constant mesh ratio $r = k/h^2$, the truncation error becomes
\begin{align*}
%\Delta_h^2 S_{j,l}^n
%&= \frac{1}{h^2} \left[
%4S + h^2 \Delta S + \frac{h^4}{12} \Delta^2 S
%- 4(1 + \delta(k))(S + k S_t + \tfrac{k^2}{2} S_{tt})
%\right] \\
 \Delta_h^2 S_{j,l}^{n}
    - \Delta S(t_n,x_j,y_l) &=  - 4r\bigl(1 + \delta(k)\bigr) \pp{S}{t}\Big|_{(x_j,y_l)}^{t_n} - \frac{4r \delta(k)}{k} S(t_n,x_j,y_l)
    +\frac{k}{12r}\Delta^2 S\Big|_{(x_j,y_l)}^{t_n}
    % \Bigl(\frac{\partial^4 S}{\partial {x}^4}\Big|_{(x_j,y_l)}^{t_n}+\frac{\partial^4 S}{\partial {y}^4}\Big|_{(x_j,y_l)}^{t_n}\Bigr)
    +\gO(k^2).
\end{align*}

%Thus, the truncation error of the nonstandard Laplacian is:
%\begin{equation}\label{eq:truncationError}
%\tau =
%- 4(1 + \delta(k)) r S_t
%- 4 \delta(k) \frac{r}{k} S
%+ \frac{k}{12r} \Delta^2 S
%+ \gO(k).
%\end{equation}

In order to avoid the divergent second $1/k$-term, we must require $\delta(k)=\gO(k)$,
say $\delta(k)=\nu k+\gO(k^2)$. We plug this into the parts of the truncation error that 
are not vanishing as $k\to0$:
\begin{equation}
 - 4r\bigl(1 + \nu k\bigr) \pp{S}{t}\Big|_{(x_j,y_l)}^{t_n} - 4 \nu r S(t_n,x_j,y_l).
\end{equation}
This still has an $\gO(1)$-component, unless we cancel both leading-order terms:
\begin{equation}
 0 = 4r\pp{S}{t}\Big|_{(x_j,y_l)}^{t_n} + 4 \nu r S(t_n,x_j,y_l), \quad\Rightarrow\quad \nu =-\frac{S_t}{S}\Big|_{(x_j,y_l)}^{t_n}.
\end{equation}
This again gives a non-universal expression for $\nu$ -- it depends on the solution
$S$, and its time derivative $S_t$, which can be computed numerically as % so this is not acceptable for a general discretization.
\begin{equation}
   \nu=\nu(S_{j,l}^n) = -\frac{S_{j,l}^n-S_{j,l}^{n-1}}{kS_{j,l}^n},\;n=1,2,\dots,\quad \nu(S_{j,l}^0)=0,
\end{equation}
and accordingly for the other four components.
Hence, we finally end up with the following modification of the NSFD scheme \eqref{eq:model2}, which reads
\begin{align*}
    S_{j,l}^{n+1} &= \frac{S_{j,l}^n 
    +\lambda_S r(k) (S_{j-1,l}^{n}+S_{j+1,l}^{n}+S_{j,l-1}^{n}+S_{j,l+1}^{n})   
    +\phi(k) (\Lambda +\rho Q_{j,l}^n)}
    {1 +4\lambda_S r(k) (1+\nu(S_{j,l}^n) k) + \phi(k) (\beta_1 E_{j,l}^n + \beta_2 I_{j,l}^n +\mu) }, \\
    %%%%%%%%
    E_{j,l}^{n+1} &= \frac{E_{j,l}^n
    +\lambda_E r(k) (E_{j-1,l}^{n}+E_{j+1,l}^{n}+E_{j,l-1}^{n}+E_{j,l+1}^{n}) 
    + \phi(k) \beta_1 E_{j,l}^n S_{j,l}^{n+1} } 
    {1 +4\lambda_E r(k)(1+\nu(E_{j,l}^n) k) +\phi(k)(\delta + \mu) }, \\
    %%%%%%%
    I_{j,l}^{n+1} &= \frac{I_{j,l}^n
    +\lambda_I r(k) (I_{j-1,l}^{n}+I_{j+1,l}^{n}+I_{j,l-1}^{n}+I_{j,l+1}^{n}) 
    + \phi(k)(\beta_2 I_{j,l}^n S_{j,l}^{n+1} + \delta E_{j,l}^{n+1} )} 
    {1 +4\lambda_I r(k)(1+\nu(I_{j,l}^n) k) +\phi(k)(\gamma + \mu) }, 
     \stepcounter{equation}\tag{\theequation}\label{eq:model2mod}\\
    %%%%%%%
    Q_{j,l}^{n+1} &= \frac{Q_{j,l}^n 
    +\lambda_Q r(k) (Q_{j-1,l}^{n}+Q_{j+1,l}^{n}+Q_{j,l-1}^{n}+Q_{j,l+1}^{n})
    + \phi(k) \gamma I_{j,l}^{n+1}} 
    {1 +4\lambda_Q r(k)(1+\nu(Q_{j,l}^n) k) + \phi(k)(\alpha + \rho + \mu) },  \\
    %%%%%%%
    R_{j,l}^{n+1} &= \frac{R_{j,l}^n
    +\lambda_R r(k)  (R_{j-1,l}^{n}+R_{j+1,l}^{n}+R_{j,l-1}^{n}+R_{j,l+1}^{n}) 
    + \phi(k)\alpha Q_{j,l}^{n+1}}
    {1+4\lambda_R r(k) (1+\nu(R_{j,l}^n) k)+\phi(k) \mu }.
\end{align*} 
We will use the approach \eqref{eq:model2mod} for our numerical experiments in Section~\ref{S5}.

%%%%%%%%%%%%%%%%%%%%%%%%%%%%%%%%
\begin{remark} In one space dimension this NSFD scheme \eqref{eq:model2mod} reads
\begin{align*}
    S_{j}^{n+1} &= \frac{S_{j}^n 
    +\lambda_S r(k) (S_{j-1}^{n}+S_{j+1}^{n})   
    +\phi(k) (\Lambda +\rho Q_{j}^n)}
    {1 +2\lambda_S r(k) (1+\nu(S_{j}^n) k) + \phi(k) (\beta_1 E_{j}^n + \beta_2 I_{j}^n +\mu) }, \\
    %%%%%%%%
    E_{j}^{n+1} &= \frac{E_{j}^n
    +\lambda_E r(k) (E_{j-1}^{n}+E_{j+1}^{n}) 
    + \phi(k) \beta_1 E_{j}^n S_{j}^{n+1} } 
    {1 +2\lambda_E r(k)(1+\nu(E_{j}^n) k) +\phi(k)(\delta + \mu) }, \\
    %%%%%%%
    I_{j}^{n+1} &= \frac{I_{j}^n
    +\lambda_I r(k) (I_{j-1}^{n}+I_{j+1}^{n}) 
    + \phi(k)(\beta_2 I_{j}^n S_{j}^{n+1} + \delta E_{j}^{n+1} )} 
    {1 +2\lambda_I r(k)(1+\nu(I_{j}^n) k) +\phi(k)(\gamma + \mu) }, 
     \stepcounter{equation}\tag{\theequation}\label{eq:model2mod1d}\\
    %%%%%%%
    Q_{j}^{n+1} &= \frac{Q_{j}^n 
    +\lambda_Q r(k) (Q_{j-1}^{n}+Q_{j+1}^{n})
    + \phi(k) \gamma I_{j}^{n+1}} 
    {1 +2\lambda_Q r(k)(1+\nu(Q_{j}^n) k) + \phi(k)(\alpha + \rho + \mu) },  \\
    %%%%%%%
    R_{j}^{n+1} &= \frac{R_{j}^n
    +\lambda_R r(k)  (R_{j-1}^{n}+R_{j+1}^{n}) 
    + \phi(k)\alpha Q_{j}^{n+1}}
    {1+2\lambda_R r(k) (1+\nu(R_{j}^n) k)+\phi(k) \mu }. 
\end{align*}  
\end{remark}

%\bigskip
%%%%%%%%%%%%%%%%%%%%%%%%%%%%%
\begin{comment}
\Tmatthias{any other approaches?}
%%%%%%%%%%%%%%%%%%%%%%%%%%%% other ideas
Other ideas
\begin{equation}\label{eq:LaplacianNSFDb}
    \Delta_h^2 S_{j,l}^{n}
    = \frac{S_{j+1,l}^n - 2S_{j,l}^{n+1} + S_{j-1, l}^n}{\xi(h^2)} + \frac{S_{j,l+1}^n - 2S_{j,l}^{n+1} + S_{j,l-1}^n}{\xi(h^2)}.
\end{equation}
with $\xi(h^2)=h^2+\gO(h^4)$ and $\phi(k)/\xi(h^2)=\text{const.}$.

\begin{equation}\label{eq:LaplacianNSFDc}
    \Delta_h^2 S_{j,l}^{n}
    = \frac{S_{j+1,l}^n - 2\Psi(k)S_{j,l}^{n+1} + S_{j-1, l}^n}{h^2} + \frac{S_{j,l+1}^n - 2\Psi(k)S_{j,l}^{n+1} + S_{j,l-1}^n}{h^2}.
\end{equation}

\begin{equation}\label{eq:LaplacianNSFDd}
    \Delta_h^2 S_{j,l}^{n}
    = \frac{S_{j+1,l}^n - 2\Psi(k)S_{j,l}^{n+1} + S_{j-1, l}^n}{\xi(h^2)} + \frac{S_{j,l+1}^n - 2\Psi(k)S_{j,l}^{n+1} + S_{j,l-1}^n}{\xi(h^2)}.
\end{equation}
\end{comment}

% \newpage

% ================================================================ Section 5

\section{Numerical Results}\label{S5}

In this section, we present numerical simulations that validate and illustrate the theoretical results established in previous sections. These simulations demonstrate the effectiveness of the proposed numerical schemes in preserving biologically meaningful dynamics.
We perform the simulations for one and two-dimensional spatial domains to analyze the spatiotemporal behavior of the SEIQR reaction-diffusion model.
We aim to verify the preservation of essential properties, such as positivity and boundedness, and to compare the performance of the SFD and NSFD methods.
Accordingly, this section is divided into two parts. The first part focuses on one-dimensional simulations, and the second part addresses the two-dimensional case.
For both settings, we use a denominator function $\phi(k) = k = 10^{-4}$ and a spatial step size $h = 0.01$. Hence, the considered parabolic mesh ratio is $r=1$.

%%%%%%%%%%%%
\subsection{One-Dimensional Simulations}
In the one-dimensional simulations, we used specific parameter values and initial conditions inspired by the following previous works:  \cite{Pei2009, Zinihi2025DB, Verma2021, ZhouM, Isik2025, Zinihi2025S, Ma2022, Hwang2022}. These are detailed below
%------------------
\begin{center}
\begin{minipage}[t]{.23\textwidth}
\begin{itemize}
\item $\Lambda = 1$,

\item $\beta_1 = 0.001$,

\item $\beta_2 = 0.003$,
\end{itemize}
\end{minipage}
\hfill
\begin{minipage}[t]{.23\textwidth}
\begin{itemize}
\item $\mu = 0.02$,

\item $\delta = 0.01$,

\item $\gamma = 0.04$,
\end{itemize}
\end{minipage}
\hfill
\begin{minipage}[t]{.45\textwidth}
\begin{itemize}
\item $\alpha = 0.04$,

\item $\rho = 0.02$,

\item $\lambda_S = \lambda_E = \lambda_I = \lambda_Q = \lambda_R = 0.1$.
\end{itemize}
\end{minipage}
\end{center}
%------------------
The spatial domain $\Omega = \{x \in \mathbb{R} \ | \ 0\leq x \leq 1\}$ is discretized into $N_x + 1$ equidistant points defined by $x_j = jh$ for $j = 0, 1, \dots, N_x$, where $N_x = \frac{1}{h}$ is the uniform spatial step size. 
For the NSFD method, we use the scheme \eqref{eq:model2mod1d}. 
In the case of the SFD method, the diffusion terms are approximated using the standard second-order finite difference formula. 
For instance, for the variable $S$, we use
\begin{equation*}
    \Delta_h S_j^n \approx \frac{S_{j+1}^n - 2S_j^n + S_{j-1}^n}{h^2},
\end{equation*}
and the time derivative is approximated by the forward difference quotient
\begin{equation*}
    \frac{\partial S_j^n}{\partial t} \approx \frac{S_j^{n+1} - S_j^n}{k}.
\end{equation*}

Figure~\ref{F2} illustrates the evolution of the epidemiological classes $S(t,x)$, $E(t,x)$, $I(t,x)$, $Q(t,x)$, and $R(t,x)$ over both space and time using the SFD method.
\begin{itemize}
\item The 3D plots (left column) show the spatiotemporal dynamics for each compartment.

\item The 2D plots (right column) represent the temporal evolution of the spatially averaged populations $S(t, \cdot)$, $E(t, \cdot)$, $I(t, \cdot)$, $Q(t, \cdot)$, and $R(t, \cdot)$.
\end{itemize}

%------------------
\begin{figure}[H]
\centering
\includegraphics[width=1\textwidth]{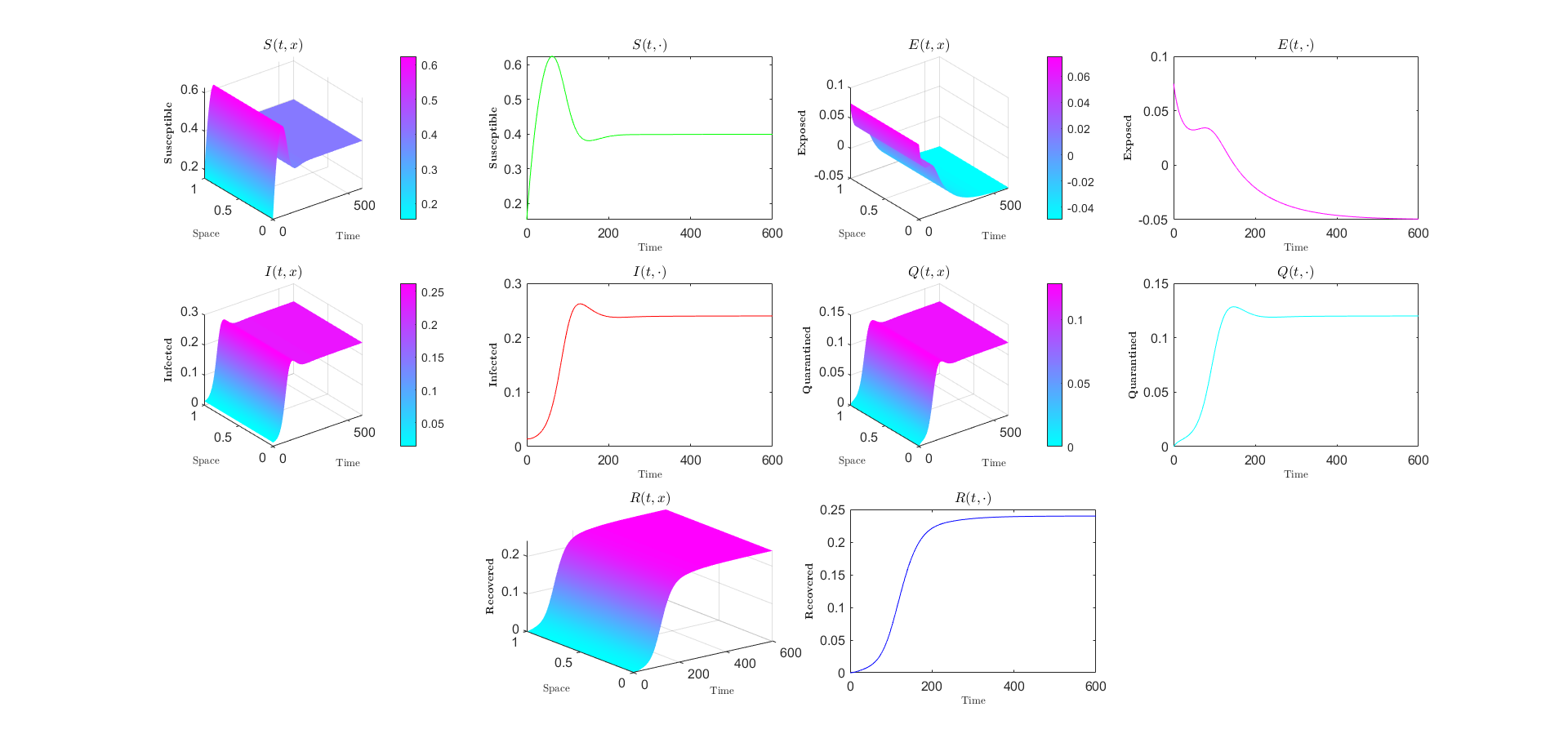}
\caption{Spatiotemporal dynamics (1D) of the SEIQR model~\eqref{E2.1}--\eqref{E2.3} using the SFD scheme.}\label{F2}
\end{figure}
%------------------
Although the general qualitative behavior of the epidemic is captured, an important issue arises in the exposed class, $E(t,x)$.
As can be seen in the corresponding 3D and 2D plots, the solution becomes negative at later times. This is biologically implausible, as population densities cannot be negative.
This drawback is a known limitation of classical explicit schemes such as SFD, which may violate positivity and boundedness unless stringent stability conditions are met.

Figure~\ref{F3} shows the same model solved using the NSFD method, which is designed to preserve key dynamical properties such as positivity and boundedness.
\begin{itemize}
\item The 3D and 2D plots for each compartment demonstrate the epidemic's smooth and realistic evolution.

\item Most notably, the exposed class $E(t,x)$ remains strictly nonnegative throughout the entire simulation period.
\end{itemize}

%------------------
\begin{figure}[htb]
\centering
\includegraphics[width=1\textwidth]{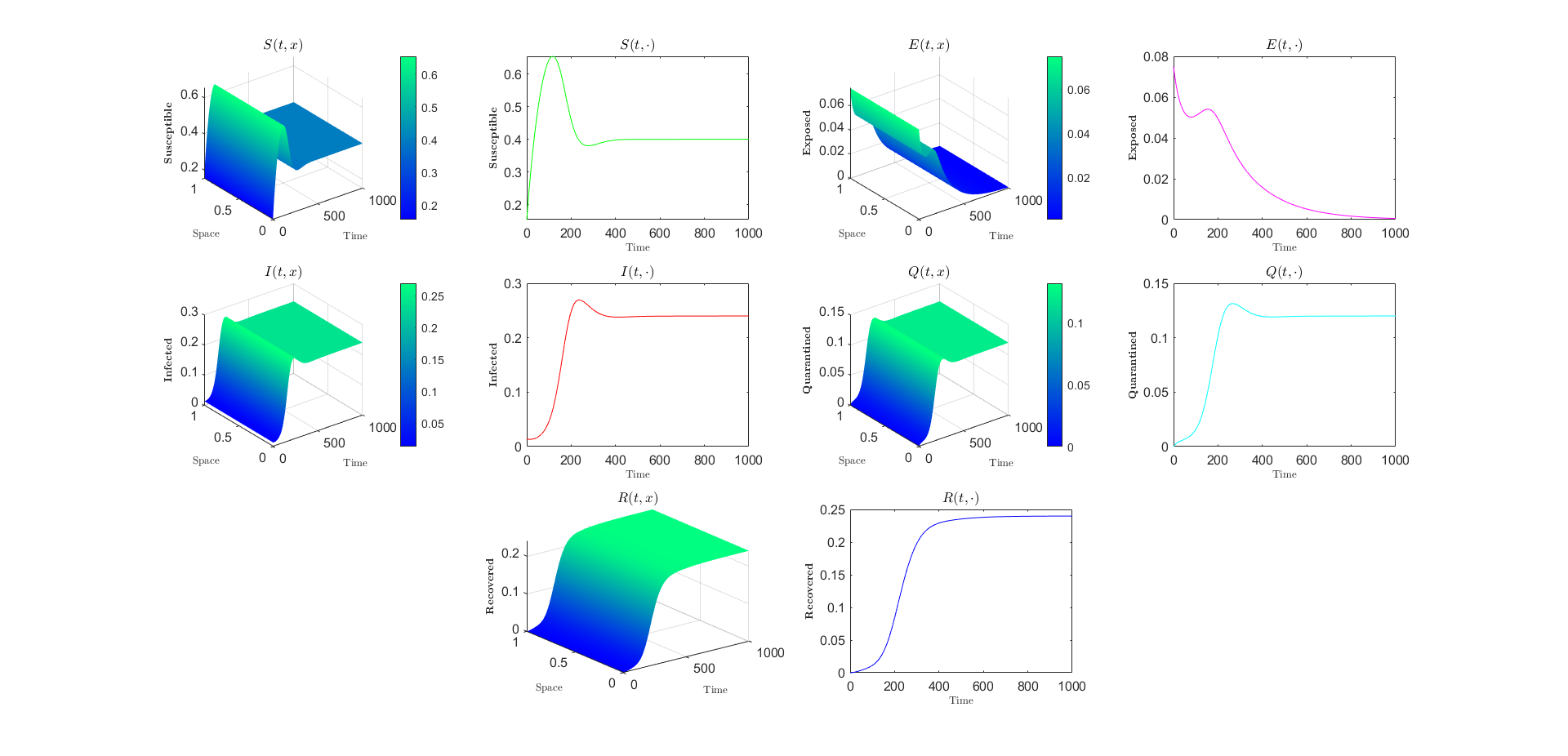}
\caption{Spatiotemporal dynamics (1D) of the SEIQR model~\eqref{E2.1}--\eqref{E2.3} using the NSFD scheme.}\label{F3}
\end{figure}
%------------------
The NSFD scheme maintains the biological integrity of the model while avoiding the numerical artifacts, such as negative populations, observed in the SFD case in Figure~\ref{F2}.

%%%%%%%%%%%%
\subsection{Two-Dimensional Simulations}
In this case, we consider the spatial domain $\Omega = [0,1] \times [0,1]$. 
At the initial time $t = 1$, the population is assumed to be uniformly distributed, with 100\,\% susceptible individuals per cell throughout $\Omega$, except in the center of the domain. 
In this central subdomain, the local population is initialized as 70\,\% susceptible, 20\,\% exposed, and 10\,\% infected.
The parameter values used in the simulations are selected based on the literature \cite{Zinihi2025S, RuizHerrera2021, Gerberry2008, Bolzoni2021, Fu2024, Li2022, Guo2023}, and are given below
% Xu2024,
%------------------
\begin{center}
\begin{minipage}[t]{.23\textwidth}
\begin{itemize}
\item $\Lambda = 1$,

\item $\beta_1 = 0.06$,

\item $\beta_2 = 0.07$,
\end{itemize}
\end{minipage}
\hfill
\begin{minipage}[t]{.23\textwidth}
\begin{itemize}
\item $\mu = 0.06$,

\item $\delta = 0.05$,

\item $\gamma = 0.04$,
\end{itemize}
\end{minipage}
\hfill
\begin{minipage}[t]{.45\textwidth}
\begin{itemize}
\item $\alpha = 0.05$,

\item $\rho = 0.03$,

\item $\lambda_S = \lambda_E = \lambda_I = \lambda_Q = \lambda_R = 0.01$.
\end{itemize}
\end{minipage}
\end{center}
%------------------
For the numerical simulations, we apply the NSFD scheme described in~\eqref{eq:model2mod}. In the case of the SFD method, the diffusion terms are approximated using the formula~\eqref{eq:LaplacianNSFD}. 
For example, the time derivative of the variable $S$ is approximated by the forward-in-time difference
\begin{equation*}
    \frac{\partial S_{j,l}^n}{\partial t} \approx \frac{S_{j,l}^{n+1} - S_{j,l}^n}{k}.
\end{equation*}

Figure~\ref{F4} presents the two-dimensional spatiotemporal evolution of the SEIQR model compartments using the classical SFD method.
The dynamics are shown as heat maps; each subplot represents the population density distribution across the two-dimensional spatial domain at a fixed time.
Initially ($t=1$), the spatial spread originates from a concentrated source. 
By $t=20$ to $t=100$, the infection progresses and spreads spatially.
However, a notable numerical flaw appears in the exposed class $E(t, x, y)$: negative values emerge at later times, especially from $t = 80$ to $t = 100$.
This is evident visually from the darker (non-physical) regions and is confirmed by the color scale, which extends below zero.
This behavior contradicts biological reality because population densities cannot be negative. 
It also indicates a limitation in the classical SFD scheme, especially for reaction-diffusion models, where preserving positivity is essential.
%------------------
\begin{figure}[htb]
\centering
\includegraphics[width=1\textwidth]{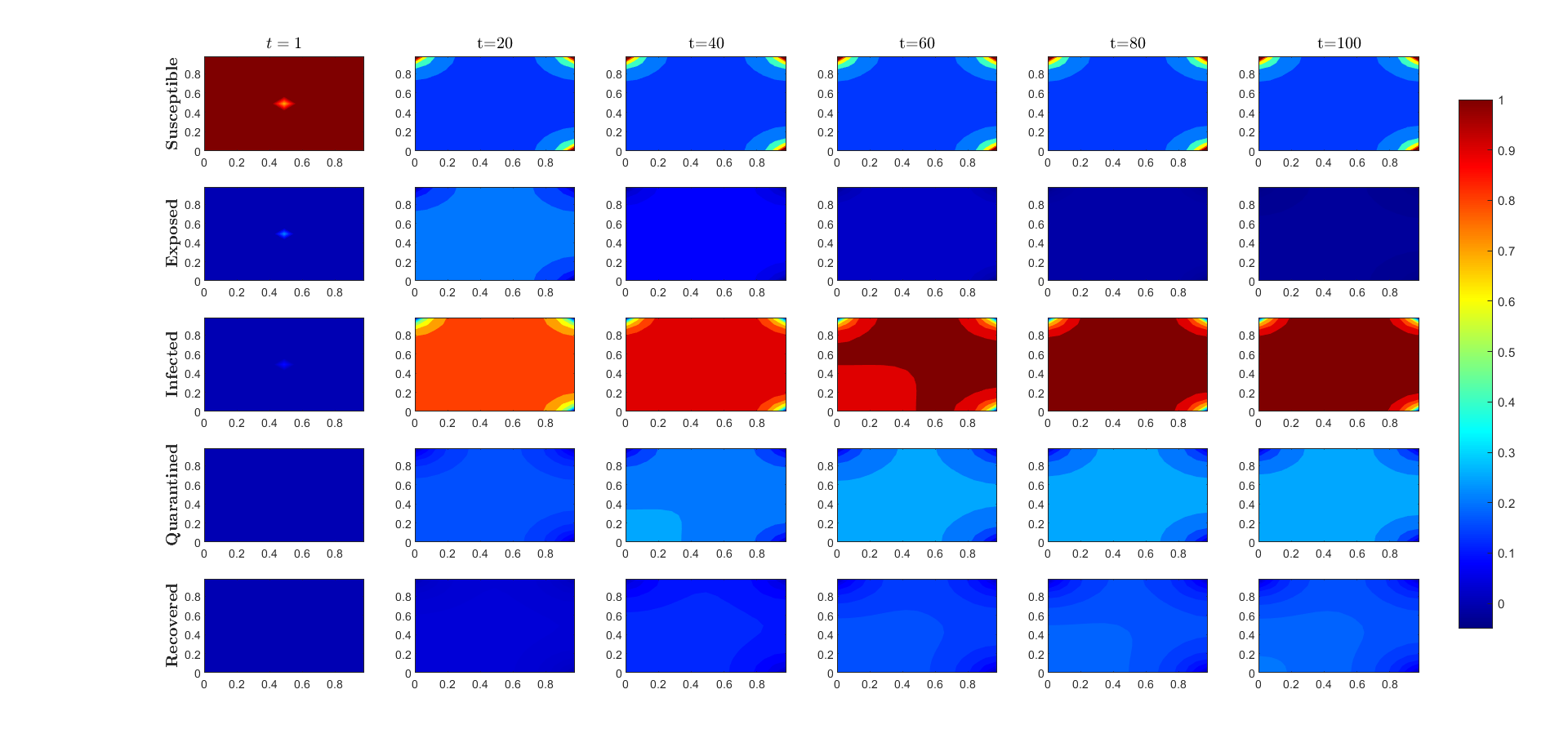}
\caption{Spatiotemporal evolution (2D) of the SEIQR model~\eqref{E2.1}--\eqref{E2.3} using the SFD scheme.}\label{F4}
\end{figure}
%------------------

In Figure~\ref{F5}, the two-dimensional spatiotemporal dynamics of the reaction-diffusion SEIQR model~\eqref{E2.1}--\eqref{E2.3} are simulated using the NSFD method. 
In this case, the spatial distributions remain smooth and biologically meaningful across time for all compartments.
Notably, the exposed class maintains non-negative values throughout the entire simulation period, thus avoiding the numerical breakdown observed in the SFD case.
All variables exhibit consistent evolution, capturing both the infection dynamics and spatial propagation without generating nonphysical artifacts.
%------------------
\begin{figure}[htb]
\centering
\includegraphics[width=1\textwidth]{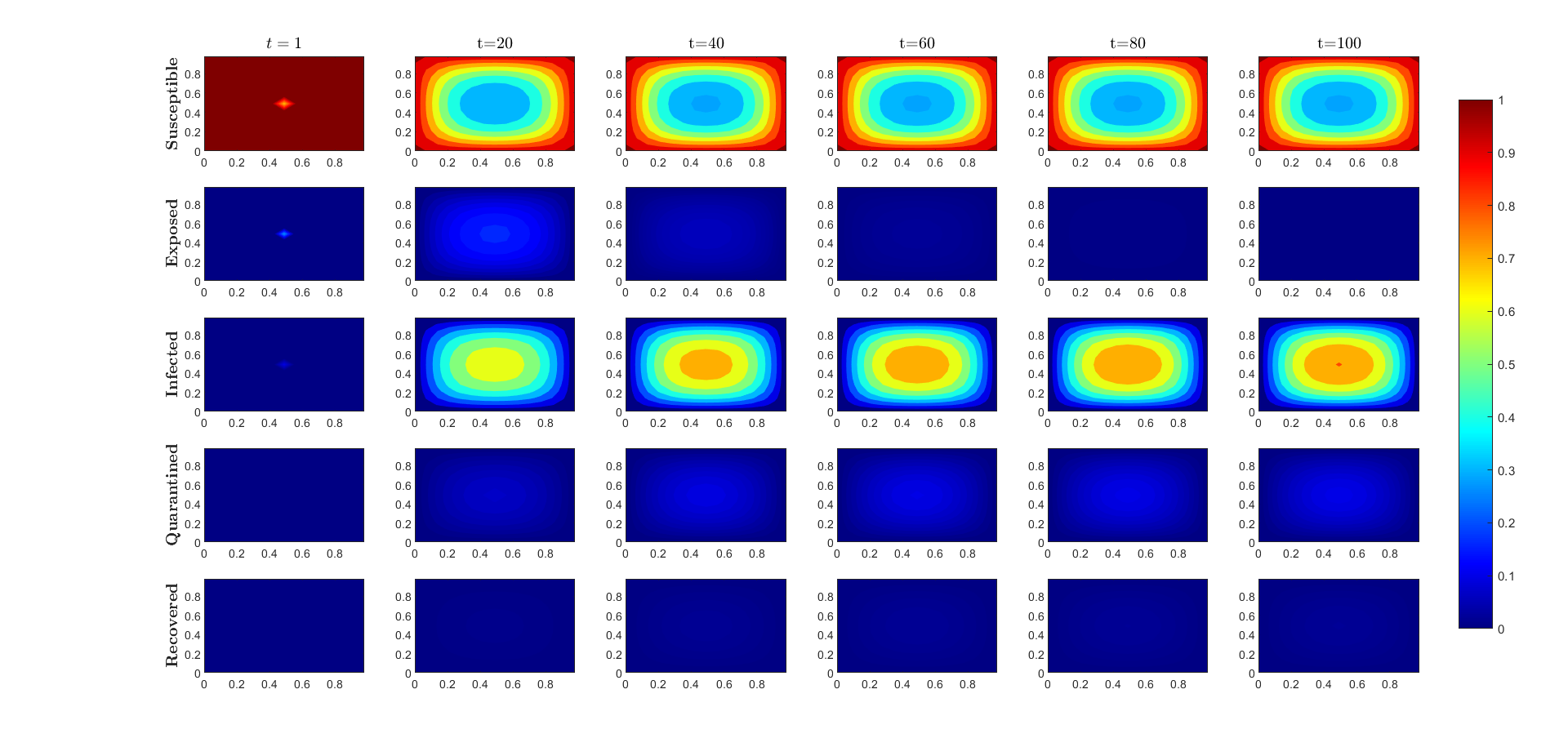}
\caption{Spatiotemporal evolution (2D) of the SEIQR model~\eqref{E2.1}--\eqref{E2.3} using the NSFD scheme.}\label{F5}
\end{figure}
%------------------

% ================================================================ Section 6
\section{Conclusion and Future Work}\label{S6}
In this work, we proposed and analyzed an SEIQR reaction-diffusion epidemiological model to investigate the spatiotemporal dynamics of infectious disease spread.
Formulated as a system of semilinear parabolic PDEs, the model extends classical compartmental frameworks by incorporating spatial diffusion to reflect population mobility and spatial heterogeneity.
To overcome the limitations of SFD schemes, we developed an NSFD method that rigorously preserves the continuous model's key properties, such as positivity, boundedness, and stability.

We proved the well-posedness of the model in an $m$-dimensional domain and constructed an NSFD scheme for one- and two-dimensional spatial domains.
A detailed numerical analysis confirmed the convergence and stability of the proposed method.
Further confirmation of the theoretical results came from numerical simulations, which demonstrated that, unlike SFD schemes, the NSFD approach avoids nonphysical artifacts, such as negative population values. This makes the NSFD approach more suitable for realistic epidemiological modeling.

Looking ahead, there are several meaningful ways to further enrich the current modeling framework and enhance its relevance to real-world scenarios. One promising approach is to integrate fractional diffusion or nonlocal operators, which can more accurately capture the long-range spatial interactions and memory effects inherent in complex biological systems.
Expanding the model to heterogeneous or time-dependent spatial domains would provide a more realistic representation of environmental variability and demographic structure. Additionally, incorporating optimal control strategies, such as vaccination, treatment, social distancing, and public awareness campaigns, could provide valuable insights into designing efficient, targeted intervention policies.

Additionally, future developments may include stochastic formulations, which account for uncertainties and random fluctuations in transmission dynamics. These formulations are particularly useful in the early stages of outbreaks or in small populations.
Using fractional-order dynamics embeds memory effects directly into the system, aligning it more closely with observed epidemiological behaviors.
% Delay differential models are also of interest for incorporating latent periods, behavioral response times, or delays in reporting and intervention. 
Finally, finance-oriented extensions grounded in actuarial science and healthcare economics \cite{Feng2022} could bridge the gap between epidemiological modeling and decision-making processes under resource constraints. These extensions could support epidemic insurance schemes, healthcare capacity planning, and the optimal allocation of limited resources.
%Together, these directions open up a multidimensional landscape for advancing both the theoretical rigor and practical applicability of epidemiological models in addressing current and future public health challenges.
% ================================================================ 
%\section*{Acknowledgments}
%The authors express their appreciation to the reviewers for their valuable, constructive comments, and suggestions.\\
%This work is carried out under the supervision of CNRST as part of the PASS program.

% ===================================================================
\section*{Declarations}

% =====================================
\subsection*{Data availability} 
All information analyzed or generated, which would support the results of this work are available in this article.
No data was used for the research described in the article.

% =====================================
\subsection*{Conflict of interest} 
The authors declare that there are no problems or conflicts 
of interest between them that may affect the study in this paper.

% ===================================================================

%\bibliographystyle{unsrt}%acm plain}
%\bibliography{References}

%%%%%%%%%%%%%%%%%%%%%%%%%%%%%%%%%%%%%%%%%%%%%%%%%%%%%%%% References
\bibliographystyle{amsalpha}

\end{document}